\documentclass[review,a4paper]{article}
\usepackage[T1]{fontenc}
\usepackage[utf8]{inputenc}
\usepackage{subfigure}
\usepackage{stmaryrd}
\usepackage[vlined, ruled, linesnumbered]{algorithm2e}
\usepackage{amssymb,amsmath,amsfonts,amsthm}
\usepackage{tikz}

\def\algname{\texttt{Solve-GLTC} }
\def\0{{\bar 0}}

\def\a{{\bf a}}
\def\b{{\bf b}}

\def\p{{\bf p}}

\def\w{{\mathbf w}}
\def\vv{{\mathbf v}}

\newtheorem{theorem}{Theorem}

\newtheorem{lemma}{Lemma}
\newtheorem{corollary}{Corollary}

\begin{document}

\title{An Exact Algorithm for the Generalized List $T$-Coloring Problem}

\author{
Konstanty Junosza-Szaniawski, Pawe{\l} Rz{\k a}{\. z}ewski  \\
\texttt{\{k.szaniawski,p.rzazewski\}@mini.pw.edu.pl} \\ \\
Warsaw University of Technology \\
Faculty of Mathematics and Information Science,\\
Koszykowa 75 , 00-662 Warsaw, Poland
}
\date{ }
\maketitle

\begin{abstract}
The generalized list $T$-coloring is a common generalization of many graph coloring models, including classical coloring, $L(p,q)$-labeling, channel assignment and $T$-coloring. Every vertex from the input graph has a list of permitted labels. Moreover, every edge has a set of forbidden differences. We ask for such a labeling of vertices of the input graph with natural numbers, in which every vertex gets a label from its list of permitted labels and the difference of labels of the endpoints of each edge does not belong to the set of forbidden differences of this edge. In this paper we present an exact algorithm solving this problem, running in time $\mathcal{O}^*((\tau+2)^n)$, where $\tau$ is the maximum forbidden difference over all edges of the input graph and $n$ is the number of its vertices.
Moreover, we show how to improve this bound if the input graph has some special structure, e.g. a bounded maximum degree, no big induced stars or a perfect matching.
\end{abstract}

\section{Introduction}
Probably no graph-theoretic problem received as much attention from discrete mathematics and theoretical computer science community, as graph coloring. Various graph coloring models are extensively studied both for their practical motivations (e.g. in frequency assignment or scheduling), and for their interesting theoretical and structural properties. In the multitude of coloring (or labeling) models, some are similar to each other, while others require completely different techniques and approaches.

A natural tendency is to look for the similarities between different models and try to unify and generalize them. In this spirit, Fiala, Kr\'al' and \v{S}rekovski \cite{FKS} defined and studied the so-called generalized list $T$-coloring problem. 
An instance of a the generalized list $T$-coloring problem is a triple $(G,\Lambda,t)$, where $G=(V,E)$ is a graph and $\Lambda$ and $t$ are functions. Each vertex $v$ of $G$ has a {\em list of permitted labels} $\Lambda(v) \subseteq \mathbb{N}$. Moreover, each edge $e$ has a list of {\em forbidden differences} $t(e) \subseteq \mathbb{N} \cup \{0\}$ over that edge. Moreover, we assume that $0 \in t(e)$ for any $e \in E$. We aim to find a labeling $\varphi \colon V \to \mathbb{N}$, such that $\varphi(v) \in \Lambda(v)$ for every $v \in V$ and $|\varphi(v) - \varphi(w)| \notin t(vw)$ for any edge $vw \in E$.

The notion of the generalized list $T$-coloring unifies many well-studied graph problems. Here we present some of them and describe how to see them as a special case of our problem.

\noindent \textbf{Graph coloring}

Although the graph coloring problem dates back to XIX century, it still raises a considerable attention from many researchers. See the book by Jensen and Toft \cite{JT} about some information about the history and many still open problems in graph coloring.
In the graph coloring problem, we want to assign colors (usually represented by natural numbers) to vertices of the graphs in such a way, that every pair of neighbors receive different colors. The list version of this problem has been introduced independently by Vizing \cite{Vizing} and by Erd\"os, Rubin and Taylor \cite{ERT}. Here each vertex has a list of colors and its color has to be chosen from this list (a non-list coloring is a special case of the list coloring with all lists equal). Therefore an instance of the list coloring is a pair $(G,\Lambda)$, where $G$ is a graph and $\Lambda$ is the function assigning the lists to vertices.

To see that list coloring is a special case of the generalized list $T$-coloring, consider an instance $(G,\Lambda)$ of the list -coloring and let $(G,\Lambda,t)$ be an instance of our problem, where $t(e) = \{0\}$ for all $e \in E$.

\noindent \textbf{$L(p,q)$-labeling}

The notion $L(p,q)$-labeling (for integers $p \geq q \geq 1$) is inspired by the frequency assignment problem in telecommunication. We look for a labeling of the vertices $G$ with integers, so that the labels of adjacent vertices differ by at least $p$ and the labels of the vertices with a common neighbor differ by at least $q$. The best studied case is $L(2,1)$-labeling, inroduced by Griggs and Yeh \cite{GY}. We refer the reader to surveys on this problem by Yeh \cite{Yeh} and Calamoneri \cite{Calamoneri}. The list version of this problem has been studied for example by Fiala and \v{S}krekovski \cite{FS}.

Notice that (list) $L(p,q)$-labeling of a graph $G$ is equivalent to an instance $(G^2,\Lambda,t)$ of the generalized list $T$-coloring, where $G^2$ is a square of $G$, $\Lambda(v)$ is the list of labels available for $v$ (or the set of all colors in a non-list case) and $t(e)=\{0,1,..,p-1\}$ for $e \in E(G)$ and $t(e)=\{0,1,..,q-1\}$ for $e \in E(G^2) \setminus E(G)$.

\noindent \textbf{Channel assignment}

Channel assignment problem is a generalization of $L(p,q)$-labeling. Its instance is a pair $(G,\omega)$, where $G$ is a graph and $\omega$ is a weight function $\omega \colon E(G) \to \mathbb{N}$. We ask for such a labeling $f$ of vertices of $G$ with natural numbers, that $|f(v) - f(u)| \geq \omega(uv)$ for any $uv \in E(G)$. We refer the reader to the survey by Kr\'{a}l' \cite{Kral-surv} for more information about this problem.

For an instance $(G,\omega)$ of the (list) channel assignment problem, we have an equivalent instance $(G,\Lambda,t)$ of the generalized list $T$-coloring, where $\Lambda(v)$ is the list of labels available for $v$ (or the set of all labels in a non-list case) and $t(e)=\{0,1,..,\omega(e)-1\}$ for $e \in E(G)$.

\noindent \textbf{$T$-coloring}

The last graph coloring model listed here is so-called $T$-coloring. It has been first introduced by Hale \cite{Hale} as {\em frequency constrained channel assignment problem}. The instance of $T$-coloring is a pair $(G,T)$, where $G$ is a graph and $T$ is a subset of natural numbers. We ask for such a labeling $f$ of vertices $G$ with natural numbers, that $|f(u)-f(v)|\notin T$ for all $uv \in E(G)$. Unlike channel assignment problem, $T$-coloring allows the case when forbidden differences do not form an interval. It is interesting to mention that for $T=\{0,7,14,15\}$ we obtain the model for interferences for the UHF transmitters (see McDiarmid \cite{McDiarmid}). The list version of this problem has been studied for example by  Alon and Zaks \cite{AZ}.

To obtain an instance of the generalized list $T$-coloring, which is equivalent to given instance of the (list) $T$-coloring, we have to set $t(e) = T$ for every $e \in E(G)$.

All the problems mentioned above are NP-complete for general graphs and remain so for many restricted graph classes. Therefore the generalized list $T$-coloring problem is NP-complete as well. When dealing with an NP-hard problem, there is little hope to design an exact algorithm,  running in polynomial time. Therefore we try to design exact exponential algorithm with exponential factor in the complexity bound as small as possible.

Many such algorithms have been presented for the best-studied of the mentioned problems, i.e. graph coloring (see for example Lawler \cite{Lawler}, Eppstein \cite{Eppstein}, Byskov \cite{Byskov}). Currently best exact exact algorithm for graph coloring was presented by Bj\"orklund {\em et al.} \cite{BHK} and runs in time $\mathcal{O}^*(2^n)$\footnote{In the $\mathcal{O}^*$ notation we suppress polynomially bounded terms.}. 

Quite a few algorithms for the $L(2,1)$-labeling problem have also been presented (see Havet {\it et al.} \cite{HKKKL}, Junosza-Szaniawski and Rzążewski \cite{IWOCA, IPL}). Currently best exact algorithm was presented by Junosza-Szaniawski {\em et al.} \cite{TAMC} and has time complexity $\mathcal{O}^*(2.6488^n)$.

An instance of channel assignment is called {\em $\ell$-bounded} if $\omega(e) \leq \ell$ for all $e \in E(G)$. The first exact algorithm for the channel assignment problem with time complexity $\mathcal{O}^*((2\ell+1)^n)$ was proposed by McDiarmid \cite{McD}. Then Kr\'{a}l' \cite{Kral-alg} presented the algorithm running in time $\mathcal{O}^*((\ell+2)^n)$. This bound was beaten by Cygan and Kowalik \cite{CK}, who showed the algorithm with time complexity $\mathcal{O}^*((\ell+1)^n)$.
It still remains a great challenge to design an exact algorithm for the channel assignment problem with time complexity bounded by $\mathcal{O}^*(c^n)$ for $c$ being a constant (or to prove that there is no such algorithm, under standard complexity assumptions).

To our best knowledge, $T$-coloring itself has not raised any attention from the exact algorithms community so far.
In this paper we present a method to solve the generalized $T$-coloring problem. Namely, we adapt the algorithm for the $L(2,1)$-labeling presented by Junosza-Szaniawski {\em et al.} \cite{TAMC}. We focus on the case when $t(e) \neq \{0\}$ for at least one one edge $e$ (in other case we obtain a well-studied list coloring problem, which can be solved in time $\mathcal{O}^*(2^n)$ by adapting the algorithm by Bj\"orklund {\em et al.} \cite{BHK}).  The time and space complexity of our algorithm is bounded by $\mathcal{O}^*((\tau+2)^n)$, where $\tau$ is the maximum forbidden difference over all edges of the input graph.

Although the algorithm by Cygan and Kowalik \cite{CK} is designed for the channel assignment problem, it can be be adapted to solve the generalized $T$-coloring problem. Its time complexity is then the same as the time complexity of our algorithm.
Their approach uses a well-known inclusion-exclusion principle and fast zeta transform.
However, we believe that the method presented in this paper is interesting on its own and can be used to solve many different problems.

In section \ref{sec:complexity} we show that the complexity bound of our algorithm can be improved if the input graph has some special structure. We consider bounded degree graphs, $K_{1,d}$-free graphs (for integer $d$) and graphs having a clique factor, i.e. a spanning subgraph whose every connected component is a clique with at least 2 vertices.

\section{Preliminaries}
\large
An instance of a the generalized list $T$-coloring problem is a triple $(G, \Lambda, t)$, where $G = (V,E)$ is a graph, $\Lambda \colon V \to 2^{\mathbb{N}}$ is a function that assigns  to each vertex a set (list) of permitted labels and $t \colon E \to 2^{\mathbb{N} \cup \{0\}}$ is a function that assigns to each edge a set of forbidden differences over that edge. We assume that $0 \in t(e)$ for any $e \in E$. We aim to find a mapping $\varphi \colon V \to \mathbb{N}$, satisfying the following conditions:
\begin{enumerate}
\item $\varphi(v) \in \Lambda(v)$ for every $v \in V$,
\item $|\varphi(v) - \varphi(w)| \notin t(vw)$ for every edge $vw \in E$.
\end{enumerate}
Such a function is called a {\em proper labeling}.

Let $\Lambda_{max}$ denote the maximum value in the set $\bigcup_{v \in V} \Lambda(v)$. 
We say that an instance of the generalized list $T$-coloring problem is $\tau$-bounded if $\max \{\bigcup_{e \in E} t(e)\} \leq \tau$. 
In this paper we focus on the case when $\tau \geq 1$.

Let $[\tau+1]$ denote the set $\{0,1,..,\tau+1\}$ and $\llbracket \tau + 1 \rrbracket$ denote the set $[\tau+1] \cup \{{\bar 0}\}$, where $\bar 0$ is a special symbol, whose meaning will be made clear later. Note that $| [ \tau + 1 ] | = \tau +2$.

For a vector $\w$ and a set of vectors $A$ let $A_\w$ denote the set $\{ \vv \colon \w\vv\in A \}$ (by $\w\vv$ we denote the concatenation of vectors $\w$ and $\vv$). Vector $\w$ is also called a {\em prefix} of a vector $\w\vv$.

\section{The algorithm}

Let $(G,\Lambda,t)$ be an instance of the general list $T$-coloring problem. We assume that the graph $G$ is connected -- in other case we may label each of its connected components separately.
For the graph $G=(V,E)$ we consider some ordering $v_1,v_2,..,v_n$ of the vertices in $V$ (this ordering will be specified later).

For a partial $k$-labeling $\varphi \colon V \to \{1,\ldots,k\}$ let $\Gamma_{\varphi} \colon V \to \{true, false\}$ be a Boolean function saying, if $\varphi$ can be extended by labeling a particular vertex with $k+1$. Formally, $\Gamma_{\varphi}(v)$ is $true$ if and only if both the following conditions are satisfied:
\begin{enumerate}
\item $k+1 \in \Lambda(v)$,
\item there is no vertex $w \in N(v)$, which is labeled by $\varphi$ with $(k+1) - \varphi(w) \in t(vw)$.
\end{enumerate}

Our strategy is to construct a labeling of $G$ in a way similar to one presented by Junosza-Szaniawski {\em et al.} \cite{TAMC}.

For every $k \in \{1,..,\Lambda_{max}\}$ we introduce a set of vectors $T[k] \subseteq \llbracket \tau + 1 \rrbracket^n$.
The set $T[k]$ contains a vector $\a\in \llbracket \tau + 1 \rrbracket^n$ if and only if
there exists a partial labeling $\varphi\colon V \to \{1,2,..,k\}$
such that :
\begin{enumerate}
\item $\a_i=0$ iff $v_i$ is not labeled by~$\varphi$ and $\Gamma_{\varphi}(v) = true$,
\item $\a_i=\0$ iff $v_i$ is not labeled by~$\varphi$ and $\Gamma_{\varphi}(v) = false$,
\item $\a_i=1$ iff $\varphi(v_i)\leq k - \tau$, and
\item $\a_i=j$ iff $\varphi(v_i)=k+j-\tau -1$ for $j \in \{2,3,..,\tau+1\}$.
\end{enumerate}

In other words, this encoding of partial $k$-channel assignments unifies the sets of vertices labeled with labels not exceeding $k - \tau$, since they do not interfere with the next label to be assigned, which is $k+1$. Moreover, $0$ indicates an unlabeled vertex, which can be labeled with label $k+1$, while $\0$ indicates an unlabeled vertex with label $k+1$ blocked.

Once we have computed $T[\Lambda_{max}]$, we can easily verify if there exists a proper labeling of the input graph -- it corresponds to a vector having no $0$s and $\0$s (as they encode unlabeled vertices). Formally, the instance of the generalized list $T$-coloring problem is a {\sc yes}-instance if and only if $T[\Lambda_{max}] \cap \{1,2,\ldots,\tau,\tau+1\}^n \neq \emptyset$.

The only thing left is to compute the tables $T[k]$ quickly.
Let us define the partial function: $\oplus  \colon \llbracket \tau + 1 \rrbracket \times\{0,1\}\to [\tau +1]$ in the following way:

$$
x \oplus y = \left \{
\begin{array}{l l}
0 & \text{if $x \in \{0,\0\}$ and $y=0$} \\
1 & \text{if $x \in \{1, 2\}$ and $y=0$} \\
x-1 & \text{if $x \in \{3,4,..,\tau+1\}$ and $y=0$} \\
\tau +1 & \text{if $x =0$ and $y=1$} \\
\text{undefined} & \text{otherwise.}
\end{array}
\right .
$$

The table below shows the values of $\oplus$ for different inputs (entry ,,$-$'' means that the value is undefined) .

\begin{displaymath}
\begin{array}{c | c c c c c c c c c}
\oplus & 0 & \0 & 1 & 2 & 3 & 4 & \dots & \tau & \tau +1 \\ \hline
0 & 0 & 0 & 1 & 1 & 2 & 3 & \dots & \tau -1 & \tau \\
1 & \tau +1 & - & - & - & - &- &  \dots & - & -
\end{array}
\end{displaymath}

We generalize $\oplus$ to vectors coordinate-wise, i.e.
$$
x_1x_2.. x_m\oplus y_1y_2.. y_m=
\begin{cases}
(x_1\oplus y_1)..(x_m\oplus y_m)&
\vtop{\hsize=0.3\hsize\noindent\strut
       if $x_i\oplus y_i$ is defined for all
       $i\in\{1,..,m\}$,\strut}\\
\text{undefined}&\text{otherwise.}
\end{cases}
$$

For two sets of vectors $A\subseteq \llbracket \tau + 1 \rrbracket^m$ and 
$B\subseteq\{0,1\}^m$ we define 
$$
A\oplus B= \{\,\a\oplus\b \colon \a\in A,\ \b\in B,\ \a\oplus\b \text{ is
defined}\,\}.
$$

Let $P\subseteq\{0,1\}^n$ be the set of the characteristic vectors of all independent sets of $G$.
Formally, $\p\in P$ if and only if there is an independent set
$X\subseteq V$ such that for all $i \in \{1,2,..,n\}$ holds $\p_i=1$ if and only if $v_i\in X$.

We shall use $T[k] \oplus P$ to find $T[k+1]$.

For a vector $\b \in [ \tau + 1 ]^n$ let $\overline{\b}^{(k)}=\overline{\b}^{(k)}_1\ldots \overline{\b}^{(k)}_n$ denote a vector in $\llbracket \tau + 1 \rrbracket^n$ such that:
$$\overline{\b}^{(k)}_i = \left \{
\begin{array}{l l}
0   & \text{iff $\b_i= 0$ and $k+2 \in \Lambda(v_i)$ and there is no $v_j \in N(v_i)$ such that $\tau -\b_j+ 2 \in t(v_iv_j)$}\\
\0  & \text{iff $\b_i= 0$ and $k+2 \notin \Lambda(v_i)$ or there exists $v_j \in N(v_i)$ such that $\tau -\b_j + 2 \in t(v_iv_j)$}\\
j & \text{iff $\b_i= j$ for $j \in \{1,..,\tau+1\}$.}
\end{array}
\right .$$

Analogously, for a set of vectors $B$ let $\overline B ^{(k)}$ denote the set $\{ \overline{\b}^{(k)} \colon \b \in B\}$.

\begin{lemma}
For all $k \geq 1$ it holds that $\overline{T[k] \oplus P}^{(k)} = T[k+1]$.
\end{lemma}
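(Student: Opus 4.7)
The plan is to prove the equality by double inclusion, reading $\oplus P$ as the operation ``extend a partial labeling by assigning label $k+1$ to some independent set of $G$'' and $\overline{\,\cdot\,}^{(k)}$ as the bookkeeping that slides the window of active labels by one and recomputes $\Gamma$ against the next candidate label $k+2$. The two halves of the proof are structurally parallel: one builds a labeling with values in $\{1,\ldots,k+1\}$ from a labeling with values in $\{1,\ldots,k\}$, the other decomposes such a labeling into its restriction and the fibre over $k+1$.

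For the inclusion $\overline{T[k] \oplus P}^{(k)} \subseteq T[k+1]$, I would start from $\a \in T[k]$ witnessed by a partial labeling $\varphi$, and $\p \in P$ that is the characteristic vector of some independent set $X$. The fact that $\a \oplus \p$ is defined forces $\a_i = 0$ whenever $\p_i = 1$, and by condition~1 in the definition of $T[k]$ this yields $\Gamma_\varphi(v_i) = true$ for every $v_i \in X$. Hence $\varphi$ extends to $\varphi'$ by assigning $k+1$ to every vertex of $X$: independence of $X$ prevents conflicts inside $X$, and the $\Gamma_\varphi$ guarantees prevent conflicts with already-labeled vertices. A coordinate-by-coordinate check then shows that $\overline{\a \oplus \p}^{(k)}$ is exactly the $T[k+1]$-encoding of $\varphi'$.

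For the reverse inclusion $T[k+1] \subseteq \overline{T[k] \oplus P}^{(k)}$, I would take $\c \in T[k+1]$ witnessed by $\psi \colon V \to \{1,\ldots,k+1\}$ and set $X := \psi^{-1}(k+1)$, which is independent because $0 \in t(e)$ for every edge. Restricting $\psi$ to $V \setminus X$ yields a partial labeling $\varphi$; let $\a$ be its $T[k]$-encoding and $\p$ the characteristic vector of $X$. Then $\p \in P$, and the coordinate $\a_i = 0$ required for $v_i \in X$ is forced because $\psi$ already witnesses $\Gamma_\varphi(v_i) = true$, so $\a \oplus \p$ is defined. Re-running the same coordinate-wise analysis in reverse gives $\overline{\a \oplus \p}^{(k)} = \c$.

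The main obstacle is the coordinate-wise verification for unlabeled vertices. I need to check that the clause ``there is no $v_j \in N(v_i)$ with $\tau - \b_j + 2 \in t(v_iv_j)$'' in the definition of $\overline{\b}^{(k)}_i$ faithfully recomputes $\Gamma_{\varphi'}(v_i)$ against the candidate label $k+2$. The heart of the matter is the identity $\tau - \b_j + 2 = (k+2) - \varphi'(v_j)$ for every labeled neighbor $v_j$, which I would verify by case analysis on $\a_j$ and $\p_j$: the collapsed class $\a_j = 1$ (labels $\leq k-\tau$), the boundary $\a_j = 2$ (label $k-\tau+1$, which together shift into $\b_j = 1$), the middle classes $\a_j \in \{3,\ldots,\tau+1\}$ that each shift down by one under $\oplus 0$, and the fresh case $\p_j = 1$ sending $0 \oplus 1$ to $\tau + 1$. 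For unlabeled neighbors $\b_j = 0$, so $\tau - \b_j + 2 = \tau + 2 > \tau$ is harmlessly absent from every $t(e)$. This is the algebraic identity that ties the encoding shift to the semantics of $\Gamma$ and makes the whole update work in a single step.
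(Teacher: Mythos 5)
Your proposal is correct and follows essentially the same route as the paper's proof: a double inclusion in which definedness of $\oplus$ is matched with $\Gamma_\varphi$ being $true$ on the independent set receiving label $k+1$, and the $T[k+1]$-encoding is verified coordinate-wise against $\overline{\,\cdot\,}^{(k)}$. Your explicit check of the identity $\tau - \b_j + 2 = (k+2) - \varphi'(v_j)$ (with the collapsed class $\b_j=1$ handled via $\tau$-boundedness) is a slightly more detailed rendering of the same verification the paper sketches.
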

\begin{proof}
First, we shall prove $\overline{T[k] \oplus P}^{(k)} \subseteq T[k+1]$.

Let $\b'$ be a vector from $\overline{T[k] \oplus P}^{(k)}$. Thus there exist $\b \in T[k]$ and $\p \in P$ such that $\overline{\b \oplus \p}^{(k)} = \b'$.
Let $\varphi$ be a partial $k$-labeling corresponding to $\b$ and $X$ be an independent set encoded by $\p$.

Note that since $\b \oplus \p$ is defined, the labeling $\varphi$ could be extended by labeling every vertex from $X$ with label $k+1$, obtaining a proper partial $(k+1)$-labeling $\varphi'$.
Moreover, $\varphi'$ can be extended to a proper partial $(k+2)$-labeling by labeling a vertex $v_i$ with label $k+2$ if and only if:
\begin{itemize}
\item $v_i$ is not labeled by $\varphi'$ (so $\b_i \in \{0,\0\}$ and $\p_i = 0$),
\item $k+2 \in \Lambda(v)$,
\item there is no neighbor $v_j$ of $v_i$, which is labeled by $\varphi'$ such that $(k+2)- \varphi'(v_j) \in t(v_iv_j)$. Such a conflict cannot occur if $\varphi'(v) \leq k - \tau + 1$, which is equivalent to $\b'_j = 1$ (which happens when $\b_j =1$ or $\b_j = 2$). 
In the remaining cases we have $(k+2)- \varphi'(v_j) \in t(v_iv_j)$ equivalent to $\tau - \b'_j + 2  \in t(v_iv_j)$.
\end{itemize}

To sum up, one can observe that $\b'$ corresponds to a partial $(k+1)$-labeling $\varphi'$ such that:
\begin{enumerate}
\item $\varphi'(v) = \varphi(v)$ iff $\b'_i \notin \{0,\0,\tau+1\}$,
\item $\varphi'(v) = k+1$ iff $v \in X$ (thus $\b'_i = \tau+1$),
\item $v_i$ is unlabeled by $\varphi'$, $k+2 \in \Lambda(v_i)$ and there is no neighbor $w$ of~$v_i$ with $k+2 - \varphi'(w) \in t(v_iw)$ iff $\b'_i=0$,
\item $v_i$ is unlabeled by $\varphi'$, but $k+2 \notin \Lambda(v_i)$ or there is no neighbor $w$ of~$v_i$ with $k+2 - \varphi'(w) \in t(v_iw)$ iff $\b'_i=0$,
\end{enumerate}
Thus $\b' \in T[k+1]$.

On the other hand, let $\b' \in T[k+1]$. Let $\varphi'$ be a partial labeling corresponding to $\b'$ and $\varphi$ be partial labeling defined as follows:
$$\varphi(v) = \left \{
\begin{array}{l l}
\varphi'(v)   & \text{if $\varphi'(v)\neq k+1$,}\\
\text{unlabeled}   & \text{otherwise.}
\end{array}
\right .$$
Since $\varphi$ is a partial $k$-labeling of $G$, there exists $\b \in T[k]$ corresponding to it. Clearly the set $X = \{v \colon \varphi'(v)=k+1\}$ is independent in $G$ and therefore its characteristic vector $\p$ belongs to $P$. Note that $\b' = \overline{\b \oplus \p}^{(k)}$ and therefore $b' \in \overline{T[k] \oplus P}^{(k)}$.
This finishes the proof.
\end{proof}

Moreover, observe that if we set $T[0] := \{0^n\}$, the following holds: $T[1] = \overline{T[0] \oplus P}^{(0)}$.

Note that computing $T[k+1]$ from $T[k] \oplus P$ takes time $\mathcal{O}(|T[k+1]| \cdot n^2)$ (for all pairs of vertices we have to check if they are adjacent and for every vertex we have to check if $k+2$ is on its list of permitted colors).
Having computed all sets $T[k]$, determining the optimal span can be performed in time linear in sizes of these sets.

To compute $T[k] \oplus P$ efficiently we will partition the vertex set into subsets of a bounded size (they shall be defined later). Let $\mathcal{S} = (S_1,S_2,..,S_r)$ be an ordered partition of $V$. Let $s_i := |S_i|$ for all $i \in \{1,..,r\}$ (we require that all $s_i$'s are bounded by some constant $D$). Moreover, the ordering of sets in $\mathcal{S}$ corresponds to the ordering of vertices of the graph (the vertices within each $S_i$ appear in any order):
$$\underbrace{v_1, v_2, \ldots , v_{s_1}}_{S_1}, \underbrace{v_{s_1+1}, v_{s_1+2}, \ldots , v_{s_1+s_2}}_{S_2}, \ldots , \underbrace{v_{n - s_r + 1}, v_{n - s_r + 2}, \ldots , v_{n}}_{S_r}.$$

Then we shall process each of the subsets at once, in the following manner (in each step we consider the first set from $\mathcal{S}$ and delete it from $\mathcal{S}$).
       
$$
T[k]\oplus P = \bigcup_{\substack{\b \in \llbracket \tau + 1 \rrbracket^{s_i}\\ \p \in \{0,1\}^{s_i}\\ \text{s.t. } \b \oplus \p  \text{ is defined}}}  (\b \oplus \p) (T[k]_{\b} \oplus P_{\p}).
$$

We can rewrite this formula in the following way.

$$
T[k]\oplus P = \bigcup_{\substack{\a \in [\tau+1]^{s_i}\\ \p \in \{0,1\}^{s_i}}} \a \left [ \left ( \bigcup
_{\substack{\b \in \llbracket \tau+1 \rrbracket^{s_i}\\ \text{s.t. } \b \oplus \p = \a}} T[k]_{\b}
\right ) \oplus P_{\p} \right ]
$$

The computation can be omitted whenever the prefix $\a$ cannot appear in any vector of $T[k] \oplus P$.
See the pseudo-code of the Algorithm \ref{alg-compute} for another description of the computation of $T[k+1]$.
The input arguments are: a graph $G$, a partition of its vertex set $(S_1,S_2,\ldots,S_r)$, previously computed table $T[k]$ and  the set $P$ of encodings of independent sets in $G$.

\begin{algorithm}[H]
\label{alg-compute}
\caption {\texttt{Compute}$(G,(S_1,S_2,\ldots,S_r),T[k{]},P)$}
$Q \gets \emptyset$\\
\lIf {$r = 0$}
{
\Return $Q$\\
}
\ForEach {$\a \in [\tau+1]^{s_1}$, being a feasible prefix of $T[k] \oplus P$}
{
	$\p \gets$ a binary vector such that $\p_i = 1$ iff $\a_i = \tau+1$ for $1 \leq i \leq s_1$\\
	$A \gets \emptyset$\\
	\ForEach {$b \in \llbracket \tau+1 \rrbracket^{s_1}$ such that $\b \oplus \p = \a$}
	{
		$A \gets A \cup T[k]_{\b}$\\	
	}
	\If {$A \neq \emptyset$}
	{
	$Q' \gets \texttt{Compute}(G,(S_2,\ldots,S_r),A,P_{\p})$\\
	\ForEach{$\vv \in Q'$}
		{$Q \gets Q \cup \{\a\vv\}$}
	}	
}
\Return $Q$
\end{algorithm}

Finally, all sets $T[k]$ are computed and the solution is found by the Algorithm \ref{alg-computeT}. Again, $G$ is the input graph and $\mathcal{S}$ is the partition of its vertex set.

\begin{algorithm}[H]
\label{alg-computeT}
\caption {\algname$(G,\mathcal{S})$}
$P \gets $ a set of characteristic vectors of independent sets of $G$\\
$T[0] \gets \{0^n\}$\\
\For {$k \gets 1$ \KwTo $\Lambda_{max}$}
{
$Q \gets \texttt{Compute}(G,\mathcal{S},T[k-1],P)$\\
$T[k] \gets \overline{Q}^{(k-1)}$\\
}
\lIf {$T[\Lambda_{max}] \cap \{1,2,\ldots,\tau+1\}^n \neq \emptyset$} {\Return {\sc Yes}}\\
\lElse {\Return {\sc No}}
\end{algorithm}

To show where the advantage of processing whole sets $S_i$ at once, consider the following example.
If we process each vertex separately, at each step we have to consider $\tau +2$ prefixes ($0,1,\ldots,\tau+1$).
Suppose now that our input graph has a perfect matching $\mathcal{S}$.
Then, by processing two adjacent vertices at once, we can omit all prefixes in the form $aa$ for $a \in \{2,3,\ldots,\tau+1\}$. This is because each of values in $ \{2,3,\ldots,\tau+1\}$ describes a single label and no two adjacent vertices can get the same label. Therefore instead of considering all $(\tau+2)^2$ prefixes in $[\tau +1] \times [\tau+1]$ (as we would do when processing each vertex separately), we have to deal with $(\tau+2)^2 - \tau$ prefixes. One can observe that the choice of the partition $\mathcal{S}$ is crucial for this approach and will be discussed in Section \ref{sec:complexity}.

To estimate the computational complexity of this approach, we have to calculate the number of pairs $\a$, $\p$, for which there exists at least one $\b$ such that $\b \oplus \p = \a$.
Notice that if we fix $\a$, then $\p_i = 1$ whenever $\a_i = \tau+1$ (otherwise $\p_i = 0$). Therefore the number of such pairs $\a, \p$ is equal to the number of of prefixes $\a$ (corresponding to the vertices in $S_i$), which can appear in a vector in $T[k] \oplus P$ under a partition $\mathcal{S}$. Let $f_i$ denote this number .
In each recursive computation we have to prepare up to $f_i$ pairs of sets of vectors of length $n-s_i$ and compute $\oplus$ on these pairs.
From the result we get the next table $T[k+1]$ in time $\mathcal{O}(|T[k+1]| \cdot n^2)$.
Preparing the recursive calls and combining
their results takes only time linear in the sizes of the tables $T[k]$ and $P$.
The size of $P$ is at most $n \cdot 2^n$ bits and the size of $T[k]$ is at most $n \cdot \prod_{i=1}^r f_i$ bits.
We arrive at the following recursion for the running time (in every step we remove $S_1$ from $\mathcal{S}$ and the index of every remaining set in $\mathcal{S}$ is reduced by one, so that the first one is still called $S_1$):
$$
F(n) =  \mathcal{O} \bigl( n \cdot 2^n + n^3 \cdot \prod _{i=1}^r f_i + f_1 \cdot F(n - s_1)  \bigr).
$$
 
One can verify by induction that this recursion is satisfied by the following formula:
$F(n)= \mathcal{O} \bigl(n^3 \cdot 2^n +  n^3 \cdot \prod _{i=1}^r f_i \bigr )$. By induction hypothesis we have:
 
 \begin{align*}
F(n) = & \mathcal{O} \bigl( n \cdot 2^n + n^3 \cdot \prod _{i=1}^r f_i + f_1 \cdot F(n - s_1) \bigr) \\
	 = & \mathcal{O} \bigl( n \cdot 2^n + n^3 \cdot \prod _{i=1}^r f_i \bigr) + O \bigl(f_1 \cdot (n-s_1)^3 \cdot 2^{n-s_1} + f_1 \cdot (n-s_1)^3 \cdot \prod _{i=2}^r f_i \bigr) \\
	 = & \mathcal{O} \bigl( n \cdot 2^n + n^3 \cdot \prod _{i=1}^r f_i  + f_1 \cdot (n-s_1)^3 \cdot 2^{n-s_1} + (n-s_1)^3 \cdot \prod _{i=1}^r f_i \bigr) \\
	 = & \mathcal{O} \bigl(n^3 \cdot 2^n +  n^3 \cdot \prod _{i=1}^r f_i \bigr ).
\end{align*}

Thus we obtain 
\begin{equation} \label{bound}
F(n)= \mathcal{O}^* \bigl(2^n + \prod _{i=1}^r f_i \bigr ).
\end{equation}

The space complexity of the algorithm is bounded by the total size of sets $T[k]$ and the set $P$. Therefore it is bounded by the same expression as computational complexity, i.e. $\mathcal{O}^* \bigl(2^n + \prod _{i=1}^r f_i \bigr )$.

\subsection{Complexity bounds} \label{sec:complexity}

In this section we shall consider several possible partitions $\mathcal{S}$ of the vertex set and use them to bound the complexity of the algorithm with functions of various invariants of $G$.

\begin{theorem} The algorithm \algname solves the $\tau$-bounded generalized list $T$-coloring problem on a graph $G$ with $n$ vertices in time $\mathcal{O}^*((\tau+2)^n)$.
\end{theorem}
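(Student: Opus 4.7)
The plan is to apply the general complexity bound (\ref{bound}) with the simplest possible partition of $V$, namely the trivial partition where each $S_i$ is a singleton. With $r = n$ and $s_i = 1$ for all $i$, the quantity $f_i$ counts the feasible single-coordinate prefixes $\a \in [\tau+1]$ for vertex $v_i$, and there are only $|[\tau+1]| = \tau+2$ possible values such a prefix can take. Hence trivially $f_i \le \tau+2$ for every $i$.

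Plugging this into (\ref{bound}) yields
\begin{equation*}
F(n) = \mathcal{O}^*\!\bigl(2^n + \textstyle\prod_{i=1}^n f_i\bigr) = \mathcal{O}^*\!\bigl(2^n + (\tau+2)^n\bigr) = \mathcal{O}^*((\tau+2)^n),
\end{equation*}
where the last equality uses $\tau \ge 1$, so that $(\tau+2)^n \ge 3^n \ge 2^n$ dominates the $2^n$ term coming from the cost of enumerating $P$.

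Correctness of the algorithm is already established by the preceding lemma: one shows inductively that the table $T[k]$ computed by \algname{} equals the set of encodings of all partial $k$-labelings, using $T[1] = \overline{T[0] \oplus P}^{(0)}$ with $T[0] = \{0^n\}$ as the base case and the lemma $T[k+1] = \overline{T[k] \oplus P}^{(k)}$ as the inductive step. Thus $T[\Lambda_{\max}] \cap \{1,\ldots,\tau+1\}^n \neq \emptyset$ iff a proper labeling exists, which is precisely the test performed at the end of \algname.

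There is essentially no obstacle here: the theorem is a direct corollary of the bound (\ref{bound}) applied to the coarsest-grained (i.e.\ per-vertex) instantiation of the framework. The only thing to double-check is that $\Lambda_{\max}$ is polynomially bounded so that the outer loop of \algname{} contributes only a polynomial factor absorbed into $\mathcal{O}^*$; this is standard, since without loss of generality labels can be restricted to an interval of length polynomial in $n$ and $\tau$, so the $\Lambda_{\max}$ iterations only contribute a polynomial factor. The finer partitions leading to improvements under structural assumptions on $G$ are the subject of the subsequent results and are not needed for this base bound.
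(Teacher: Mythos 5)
Your proof is correct and follows essentially the same route as the paper: take the singleton partition $S_i = \{v_i\}$, bound each $f_i$ by $\tau+2$, and plug into formula (\ref{bound}) using $\tau \geq 1$ to absorb the $2^n$ term. The additional remarks on correctness and on bounding $\Lambda_{max}$ are sensible but not part of the paper's (very short) argument.
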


\begin{proof}
Let $v_1,v_2,\ldots,v_n$ be an arbitrary ordering of vertices of $G$ and let $S_i = \{v_i\}$ for $i \in \{1,..,n\}$. Clearly there are at most $\tau+2$ prefixes $\a$ of length $1$, which can appear in any vector from $T[k] \oplus P$. Using formula (\ref{bound}) we obtain the bound for the complexity $F(n)=\mathcal{O}^*(2^n + \prod _{i=1}^n (\tau+2)) = \mathcal{O}^*((\tau+2)^n)$ (recall that we assume that $\tau \geq 1$).
\end{proof}

However, we can improve it by a more careful construction of the partition $\mathcal{S}$.

\subsubsection{Partitions into stars}

Let $\mathcal{S} = \{S_1,S_2,\ldots,S_r\}$ be a partition of the vertex set of $G$, such that for any $i = 1,2,\ldots,r$ we have $s_i \geq 2$ and $G[S_i]$ has a spanning subgraph, which is a star. We call such a partition a {\em star partition} of $G$.
Note that every connected graph (with at least $2$ vertices) has a star partition. Some ways of constructing star partitions will be described further in this section.

\begin{lemma}
Let $G$ be a graph with $n$ vertices having a star partition $\mathcal{S} = \{S_1,S_2,\ldots,S_r\}$ such that $2 \leq s_i \leq D+1$ for some constant $D$ and each $i \in 1,2,\ldots,r-1$ and and $s_r \leq D'$ for some constant $D'$.
The algorithm \algname solves the $\tau$-bounded generalized list $T$-coloring problem on $G$ in time $\mathcal{O}^* \left ((\tau^2 + 3\tau +4)^{n/2} + (2(\tau+2)^{D} + \tau (\tau+1)^{D})^\frac{n}{(D-1)} \right )$.
\end{lemma}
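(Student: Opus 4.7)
The plan is to invoke the complexity bound (\ref{bound}) and analyze, star by star, the number $f_i$ of prefixes $\a\in[\tau+1]^{s_i}$ that can actually appear at the $S_i$-coordinates of a vector in $T[k]\oplus P$. Any additional edges of $G[S_i]$ beyond those of the spanning star can only tighten the count, so I would restrict attention to the center--leaf adjacency that the star partition supplies.

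First I would translate a center--leaf edge into a combinatorial restriction on $\a$. Reading off the $\oplus$ table, a coordinate $\a_i=\tau+1$ forces $\p_i=1$; a coordinate $\a_i\in\{2,\ldots,\tau\}$ forces $\b_i=\a_i+1$ and pins down an actual label in the fresh window $\{k-\tau+2,\ldots,k\}$; and $\a_i\in\{0,1\}$ encodes either an unlabeled vertex or a label so old that it cannot conflict with any fresh label. Hence the center and any leaf cannot share a value from $\{2,\ldots,\tau+1\}$: the value $\tau+1$ would make $\p$ non-independent on the edge $cl$, and a shared value from $\{2,\ldots,\tau\}$ would place the same actual label on both endpoints, which is always forbidden because $0\in t(cl)$.

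Conditioning on $\a_c$ then gives the per-star count. If $\a_c\in\{0,1\}$ (two options) each of the $s_i-1$ leaves ranges freely over $[\tau+1]$, contributing $2(\tau+2)^{s_i-1}$ tuples; if $\a_c\in\{2,\ldots,\tau+1\}$ ($\tau$ options) each leaf must avoid $\a_c$, contributing $\tau(\tau+1)^{s_i-1}$ tuples. This yields
\[
f_i \;\le\; 2(\tau+2)^{s_i-1}+\tau(\tau+1)^{s_i-1}.
\]
Plugging into (\ref{bound}) gives $F(n)=\mathcal{O}^*\bigl(2^n+\prod_{i=1}^r f_i\bigr)$. In the extreme case $s_i=2$ we obtain $f_i\le\tau^2+3\tau+4$ and $\prod_i f_i\le(\tau^2+3\tau+4)^{n/2}$, matching the first summand; for larger stars the uniform per-star bound $2(\tau+2)^D+\tau(\tau+1)^D$ aggregates to the second summand, with the exceptional last set $S_r$ of size at most $D'$ absorbed into the $\mathcal{O}^*$ factor since $D'$ is a constant.

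The hard part will be bookkeeping the two-term sum for partitions that genuinely mix sizes, because a priori $\prod_i f_i$ need not equal either pure extreme. I would handle this by comparing the per-vertex geometric means $f_i^{1/s_i}$ across $s_i\in\{2,\ldots,D+1\}$ and arguing that, up to polynomial slack, the product is dominated by one of the two boundary expressions; the sum in the lemma statement then upper-bounds both cases simultaneously, and the two-term form of the bound is exactly what allows the algorithm to be analyzed uniformly for any admissible star partition.
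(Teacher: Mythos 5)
Your proposal is correct and follows essentially the same route as the paper: the identical per-star count $f_i \le 2(\tau+2)^{s_i-1}+\tau(\tau+1)^{s_i-1}$ plugged into the bound $F(n)=\mathcal{O}^*(2^n+\prod_i f_i)$, followed by a comparison of the per-vertex geometric means, which is exactly the paper's analysis of $\alpha_\tau(d)=(2(\tau+2)^{d-1}+\tau(\tau+1)^{d-1})^{1/d}$ being unimodal in $d$ and hence maximized at $d=2$ or $d=D+1$. The step you flag as "the hard part" is precisely this unimodality claim, which the paper likewise only asserts via the derivative; note also that the exponent in the statement should read $n/(D+1)$ as derived in the proof, not $n/(D-1)$.
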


\begin{proof}
Let us consider a subgraph $G[S_i]$ for some $i = 1,2,\ldots,r$. Without a loss of generality let $v_j$ be a central vertex of the spanning star of $G[S_i]$ and $S_i = \{v_j,..,v_{j+s_i-1}\}$.
Note that for any $\a \in T[k] \oplus P$ if $\a_j = h \in \{2,..,\tau+1\}$, then $\a_{j'} \neq h$ for $j'\in \{j+1,..,j+d_i-1\}$, since $h$ represents a single label and each label of any feasible (possibly partial) labeling induces and independent set in $G$.

Hence the number of possible $s_i$-element prefixes of a vector $\a$ in $T[k] \oplus P$ is at most
$$f_i \leq \underbrace{2(\tau+2)^{s_i-1}}_{\a_j \in \{0,1\}} + \underbrace{\tau (\tau+1)^{s_i-1}}_{\a_j \notin \{0,1\}}.$$

This combined with formula (\ref{bound}) gives us the following bound on the complexity:
\begin{align*}
F(n)&=\mathcal{O}^*(2^n + \prod _{i=1}^{r} f_i) \\
&=\mathcal{O}^* \left (2^n + \prod _{i=1}^{r} (2(\tau+2)^{s_i-1} + \tau (\tau+1)^{s_i-1}) \right) \\
&=\mathcal{O}^* \left (2^n + \prod _{i=1}^{r-1} (2(\tau+2)^{s_i-1} + \tau (\tau+1)^{s_i-1})\underbrace{(2(\tau+2)^{s_r-1} + \tau (\tau+1)^{s_r-1})}_{\text{constant}} \right)\\
&=\mathcal{O}^* \left (2^n + \prod _{i=1}^{r-1} (2(\tau+2)^{s_i-1} + \tau (\tau+1)^{s_i-1})\right ). \\
\end{align*}

Note that the value of $\prod _{i=1}^{r-1} (2(\tau+2)^{s_i-1} + \tau (\tau+1)^{s_i-1})$ is maximized if all $s_i$'s are equal. Consider this case and let $d := s_1 = \ldots = s_{r-1}$. Note that $d \leq \frac{n}{r-1}$ and therefore $r-1 \leq n/d$.
Thus we obtain the following:
\begin{align*}
F(n)&=\mathcal{O}^* \left (2^n + \prod _{i=1}^{r-1} (2(\tau+2)^{s_i-1} + \tau (\tau+1)^{s_i-1})\right ) \\
&=\mathcal{O}^* \left (2^n + \prod _{i=1}^{n/d} (2(\tau+2)^{d-1} + \tau (\tau+1)^{d-1})\right )\\
&=\mathcal{O}^* \left (2^n + (2(\tau+2)^{d-1} + \tau (\tau+1)^{d-1})^{n/d}\right ).
\end{align*}

By analyzing the derivative of the function $\alpha_{\tau}(d)=(2(\tau+2)^{d-1} + \tau (\tau+1)^{d-1})^{1/d}$ (defined for $d \geq 2$), one can  observe that for every fixed $\tau$ there exists $d_0$ such that $\alpha_\tau(d)$ is decreasing for $d<d_0$ and increasing for $d > d_0$. Since $d \leq D+1$, we can bound the value of $\alpha_\tau(d)$ by $\max(\alpha_{\tau}(2),\alpha_{\tau}(D+1))$.
Having in mind that $\tau \geq 1$, we obtain the following solution.
\begin{align*}
F(n) &= \mathcal{O}^* \left (\alpha_{\tau}(2)^n+\alpha_{\tau}(D+1)^n \right)\\
&= \mathcal{O}^* \left ((\tau^2 + 3\tau +4)^{n/2} + (2(\tau+2)^{D} + \tau (\tau+1)^{D})^\frac{n}{(D+1)} \right )
\end{align*}
\end{proof}

Observe that for every $\tau$ there exists $d_\tau$ such that $\alpha_\tau(d) > \alpha_\tau(2)$ for any $d \geq d_\tau$ (see Figure \ref{plot-stars}).
\begin{figure}[h]
\begin{center}
\includegraphics[scale=0.5]{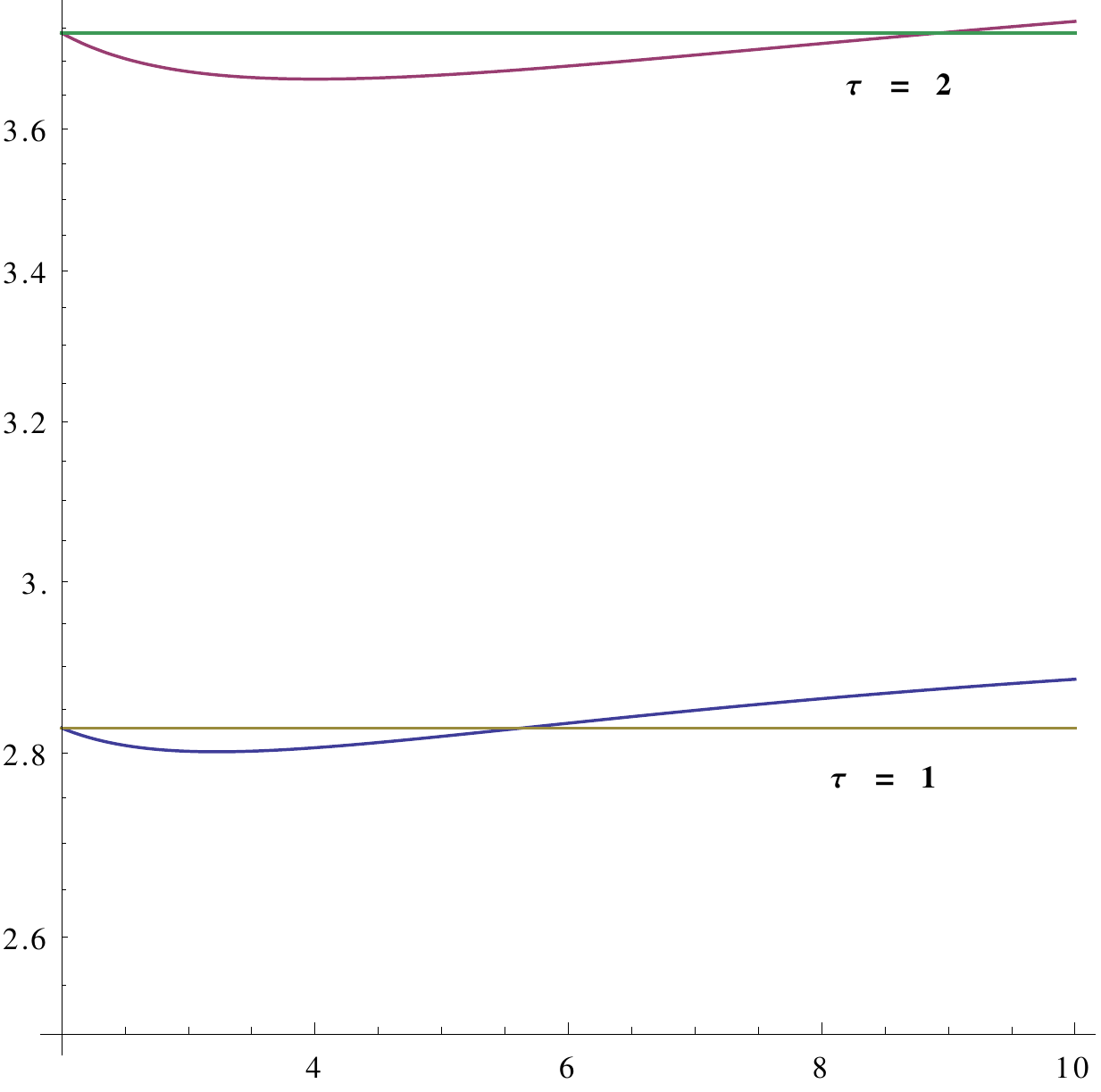}
\end{center}
\caption{The values of $\alpha_\tau(d)$ compared with $\alpha_\tau(2)$ for $\tau=1,2$.}
\label{plot-stars}
\end{figure}

The remainder of this section is devoted to various ways of constructing the initial star factor $\mathcal{S}$.

\noindent \textbf{Remark.} \label{construction-star}
Notice that a star partition can be constructed from a spanning tree of our input graph. Let us consider $T$ being a spanning tree of $G$. Let  $v$ and $u$ be, respectively, the end-vertex and its neighbor on a longest path in $T$. If $T$ is not a star, then all neighbors of $u$ in $T$ except exactly one are leaves in $T$. We include the set $S_i$ consisting of $u$ and all its neighbors which are leaves in $T$ to our partition $\mathcal{S}$. Then we proceed recursively with the tree $T \setminus S_i$. If $T$ is a star, we set $S_r = V(T)$ and finish.

Moreover, notice that if we construct our star partition using spanning tree $T$, in a way described, each set $S_i$ (for $i \in \{1,2,\ldots,r-1\}$) has at most $\Delta(T)$ elements, while $S_r$ has at most $\Delta(T) + 1$ elements.

\vskip 5pt

Observe $\Delta(T) \leq \Delta(G)$ for any spanning tree $T$ of $G$, we obtain the following corollary.

\begin{corollary}
The algorithm \algname solves the $\tau$-bounded generalized list $T$-coloring problem on $G$ with maximum degree bounded by a constant $\Delta$ in time\\ $\mathcal{O}^* \left ((\tau^2 + 3\tau +4)^{n/2} + (2(\tau+2)^{\Delta-1} + \tau (\tau+1)^{\Delta-1})^{n/\Delta} \right )$.
\end{corollary}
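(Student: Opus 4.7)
The plan is to reduce the corollary to the preceding lemma by exhibiting a star partition of $G$ whose blocks are all of bounded size in terms of $\Delta$. Since we are working under the standing assumption that $G$ is connected, we may fix a spanning tree $T$ of $G$.

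Next, I would invoke the construction described in the Remark immediately before the corollary: starting from a longest path in $T$, repeatedly peel off a set $S_i$ consisting of the penultimate vertex $u$ and all of its leaf-neighbors in the current tree, then recurse on $T \setminus S_i$; terminate when the remaining tree is itself a star, which becomes $S_r$. The Remark already certifies that this yields a star partition $\mathcal{S}=\{S_1,\ldots,S_r\}$ of $V(G)$ (since the spanning edges used to witness each $S_i$ being a star live in $T \subseteq G$), and that $|S_i| \le \Delta(T)$ for $i < r$ and $|S_r| \le \Delta(T)+1$.

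The key observation is then the elementary inequality $\Delta(T) \le \Delta(G) = \Delta$ for any spanning subgraph $T$ of $G$. Consequently $2 \le s_i \le \Delta$ for $i \in \{1,\ldots,r-1\}$ and $s_r \le \Delta+1$, so the hypotheses of the preceding lemma are met with $D = \Delta - 1$ and $D' = \Delta + 1$. Plugging $D = \Delta - 1$ into the lemma's bound
\[
\mathcal{O}^* \left((\tau^2 + 3\tau + 4)^{n/2} + (2(\tau+2)^{D} + \tau(\tau+1)^{D})^{n/(D+1)}\right)
\]
immediately yields the claimed complexity. The only mildly delicate point is the bookkeeping justifying that the single exceptional block $S_r$ of size at most $\Delta+1$ contributes only a constant factor (already absorbed by $\mathcal{O}^*$ in the lemma, since $\Delta$ is a constant), so there is no real obstacle beyond chaining the Remark and the lemma together.
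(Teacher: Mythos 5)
Your proof is correct and follows the paper's own route exactly: the paper obtains this corollary by combining the Remark's spanning-tree star partition with the observation $\Delta(T)\le\Delta(G)$ and then applying the preceding lemma with $D=\Delta-1$. Note that you have (rightly) quoted the lemma's bound with exponent $n/(D+1)$, which is what its proof actually establishes and what yields the claimed $n/\Delta$, rather than the $n/(D-1)$ misprinted in the lemma's statement.
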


The following theorem shows that we may obtain a star partition consisting of smaller stars (and therefore a better bound on the complexity of the algorithm).

\begin{theorem}[Amashi, Kano \cite{AK}, Payan]
Let $D \geq 2$ be an integer such that $\Delta(G) / \delta(G) \leq D$. Then $G$ has a star factor $\mathcal{S} = \{S_1,S_2,\ldots,S_r\}$ such that $s_i \leq D+1$ for any $i =1,2,\ldots,r$.
\end{theorem}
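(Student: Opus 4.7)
The plan is an extremal-partition argument. Since $G$ is connected, a star partition of $V(G)$ exists by the Remark in Section~\ref{sec:complexity}, so among all star partitions of $V(G)$ we may select one, $\mathcal{S}^\ast = \{S_1,\ldots,S_r\}$, that lexicographically minimizes the non-increasingly sorted tuple of star sizes. I aim to show that in such a $\mathcal{S}^\ast$ no star contains more than $D+1$ vertices.

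Suppose for contradiction that some $S \in \mathcal{S}^\ast$ has center $u$ and $k \geq D+1$ leaves $v_1,\ldots,v_k$. Then $\deg_G(u) \geq k \geq D+1$, so $\Delta(G) \geq D+1$, and the hypothesis $\Delta(G) \leq D \cdot \delta(G)$ forces $\delta(G) \geq (D+1)/D > 1$, hence $\delta(G) \geq 2$. Consequently each leaf $v_i$ has a neighbour $w_i \neq u$ in $G$. The remainder of the argument is a case analysis on where $w_i$ lies, producing in each case a local exchange that strictly decreases the extremal tuple.

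First I would treat the easy case: if some $w_i$ lies inside $S$, it must equal another leaf $v_j$, and $S$ splits into the smaller stars $\{u\} \cup (\{v_1,\ldots,v_k\}\setminus\{v_i,v_j\})$ and $\{v_i,v_j\}$, both of size strictly less than $|S|$. Otherwise every $w_i$ lies in some other star $S_{j(i)}$. If $w_i$ is the centre of $S_{j(i)}$, move $v_i$ from $S$ into $S_{j(i)}$ as a new leaf. If $w_i$ is a leaf of $S_{j(i)}$ with centre $u'$, then either detach $\{v_i,w_i\}$ as a new two-vertex star (legal when $|S_{j(i)}| \geq 3$, leaving $S_{j(i)}\setminus\{w_i\}$ still star-shaped around $u'$), or, when $|S_{j(i)}|=2$, form the three-vertex star $\{u',w_i,v_i\}$ centred at $w_i$. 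Each such exchange yields a new star partition of $V(G)$.

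The main obstacle will be showing that among the $k \geq D+1$ choices of $v_i$ at least one exchange strictly lexicographically improves $\mathcal{S}^\ast$, rather than merely shifting the offending over-size star to another slot of the tuple: absorbing $v_i$ into a star $S_{j(i)}$ of size close to $|S|$ could conceivably create a new star of size $|S|$. This is precisely where the full strength of $\Delta(G)/\delta(G)\leq D$ is used: the number of leaves of $S$ that must be rehoused grows with $k$, while the number of stars $S_j$ of size already equal to $|S|-1$ is limited by the total vertex count and by the uniform degree cap $\Delta(G)\leq D\cdot\delta(G)$. A careful counting of edges leaving $S$ versus the combined sizes of the candidate stars $S_{j(i)}$ (each of size at most $|S|$ by the extremal choice) should isolate at least one leaf whose exchange strictly improves $\mathcal{S}^\ast$, yielding the required contradiction and establishing the bound $s_i \leq D+1$ for all $i$.
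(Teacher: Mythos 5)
The paper does not actually prove this theorem; it is quoted from the literature (Amahashi--Kano, and independently Payan), so there is no in-paper argument to compare yours against. Judged on its own, your proposal has a genuine gap, and you have located it yourself: everything hinges on showing that among the $k\ge D+1$ leaves of an over-sized star at least one admits an exchange that strictly improves the extremal partition, and this step is only asserted (``a careful counting \ldots should isolate at least one leaf''), not carried out. The difficulty is real, not cosmetic. Your hypothesis only yields $\delta(G)\ge\lceil\Delta(G)/D\rceil$, which for $k=D+1$ guarantees each leaf merely \emph{one} neighbour outside its centre; that single escape route may land on the centre of a star of size $|S|-1$ (so absorbing the leaf recreates a star of size $|S|$) for \emph{every} leaf simultaneously, and nothing in your sketch rules this out. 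Worse, your claim that every other star has size at most $|S|$ ``by the extremal choice'' does not follow from lexicographic minimality of the sorted size tuple unless you first fix $S$ to be a star of maximum size, and even then the counting you gesture at (edges leaving $S$ versus sizes of receiving stars) has not been shown to produce a contradiction. As it stands the argument could terminate at a locally optimal partition that still contains a star on more than $D+1$ vertices.

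A cleaner and complete route, and essentially the one in the cited literature, is via the Amahashi--Kano criterion: $G$ has a $\{K_{1,1},K_{1,2},\ldots,K_{1,D}\}$-factor if and only if $i(G-X)\le D|X|$ for every $X\subseteq V(G)$, where $i(G-X)$ is the number of isolated vertices of $G-X$. Your hypothesis verifies this condition in two lines: if $I$ is the set of isolated vertices of $G-X$, every edge incident to $I$ ends in $X$, so $\delta(G)\,|I|\le e(I,X)\le \Delta(G)\,|X|\le D\,\delta(G)\,|X|$, whence $|I|\le D|X|$. The resulting factor is exactly a star partition with all parts of size at most $D+1$. If you want to keep a self-contained exchange argument instead, you must either prove the criterion itself or supply the missing counting; in its current form the proof is not complete.
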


{From} this we can get a significantly better bound for regular graphs (in fact in works for much wider class of graph $G$ with $\Delta(G) \leq 2\delta(G)$).

\begin{corollary}
The algorithm \algname solves the $\tau$-bounded generalized list $T$-coloring problem on a regular graph with $n$ vertices in time $\mathcal{O}^* \left ((\tau^2 + 3\tau +4)^{n/2} \right )$.
\end{corollary}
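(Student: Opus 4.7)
The corollary should follow from combining the Amashi--Kano--Payan theorem with the preceding lemma. Since $G$ is regular we have $\Delta(G)/\delta(G) = 1 \leq 2$, so applying the theorem with $D = 2$ produces a star factor $\mathcal{S} = \{S_1, \ldots, S_r\}$ of $G$ in which every part spans a star on at most $D+1 = 3$ vertices. Because a star partition has $|S_i| \geq 2$ by definition, $\mathcal{S}$ satisfies the hypothesis of the previous lemma with $D = 2$ and $D' = 3$, and there is nothing special to verify about the last part $S_r$.

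Plugging $D = 2$ into the conclusion of the preceding lemma yields the running time
\[
\mathcal{O}^* \left( (\tau^2 + 3\tau + 4)^{n/2} + (2(\tau+2)^{2} + \tau(\tau+1)^{2})^{n/3} \right).
\]
To match the statement of the corollary it suffices to show that the second summand is absorbed into the first, i.e.\ that $\alpha_\tau(3) \leq \alpha_\tau(2)$ for every integer $\tau \geq 1$, where $\alpha_\tau(d) = (2(\tau+2)^{d-1} + \tau(\tau+1)^{d-1})^{1/d}$ as in the preceding lemma. Clearing roots, this reduces to the polynomial inequality
\[
(2(\tau+2)^2 + \tau(\tau+1)^2)^2 \leq (\tau^2 + 3\tau + 4)^3,
\]
which I would verify by direct expansion: both sides are degree-$6$ polynomials in $\tau$ with identical leading term $\tau^6$, and a straightforward coefficient-wise comparison shows the difference is nonnegative for every $\tau \geq 1$.

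The only real obstacle is this last polynomial comparison; conceptually it is just the statement that $d = 3$ still lies in the decreasing regime of $\alpha_\tau$ identified in the discussion after the preceding lemma, so the maximum of $\alpha_\tau$ on $\{2,3\}$ is attained at $d = 2$. Everything else in the proof is a direct instantiation of already-proved results: the Amashi--Kano--Payan theorem supplies the star partition with the right size bound, and the previous lemma then turns that partition into the desired runtime estimate.
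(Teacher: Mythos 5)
Your proof is correct and follows exactly the route the paper intends (the paper states this corollary without an explicit proof, as an immediate consequence of the Amashi--Kano--Payan theorem applied with $D=2$ together with the preceding lemma); the polynomial comparison you defer does go through, since $(\tau^2+3\tau+4)^3 - \left(2(\tau+2)^2+\tau(\tau+1)^2\right)^2 = \tau^5+5\tau^4+11\tau^3+11\tau^2 \geq 0$ for all $\tau \geq 1$, confirming $\alpha_\tau(3)\leq\alpha_\tau(2)$. Note only that you have (correctly) instantiated the lemma's second summand with exponent $n/(D+1)$, matching the final line of the lemma's proof rather than the $n/(D-1)$ appearing in its statement, which is evidently a typo.
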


We can improve this bound for graphs with no big induced stars.
Let $d$ be number, such that $G$ has no induced $K_{1,d}$ star (i.e. it is a $K_{1,d}$-free graph).

The simplest and probably best-studied class with such a property are claw-free graphs (i.e. $K_{1,3}$-free graphs, see for example Brandst\"{a}dt {\em et al.} \cite{Br} for more information). Sumner \cite{Sumner} showed that every claw-free graph with even number of vertices has a perfect matching. Therefore it has a star partition with every set of cardinality 2 (but at most one set with cardinality 3). This observation gives us the following bound.

\begin{corollary}
The algorithm \algname solves the $\tau$-bounded generalized list $T$-coloring problem on a claw-free graph with $n$ vertices in time $\mathcal{O}^* \left ((\tau^2 + 3\tau +4)^{n/2} \right )$.
\end{corollary}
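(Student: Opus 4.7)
The plan is to apply Sumner's theorem to produce a star partition of $G$ whose parts (almost) all have size two, and then plug this into the preceding Lemma with $D = 1$. Since \algname handles connected components independently, I would first assume $G$ is connected. When $n$ is even, Sumner's theorem supplies a perfect matching $M$, and the $n/2$ edges of $M$ yield the desired star partition $\mathcal{S}$ with every $s_i = 2$. When $n$ is odd, I would use the near-perfect matching variant (Sumner/Las Vergnas) to obtain a matching $M$ saturating all but one vertex $v$; since $G$ is connected, $v$ has a neighbor $u$ covered by some edge $uw \in M$, and replacing $uw$ in the partition by the triple $\{u, v, w\}$ (a $K_{1,2}$ centered at $u$) produces a star partition with $s_i = 2$ for $i < r$ and $s_r = 3$.

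Next, I would invoke the preceding Lemma with $D = 1$ and $D' = 3$. Plugging $D = 1$ into the relevant formula gives $2(\tau+2)^D + \tau(\tau+1)^D = 2(\tau + 2) + \tau(\tau + 1) = \tau^2 + 3\tau + 4$, and the exponent becomes $n/(D+1) = n/2$; both summands inside the Lemma's $\mathcal{O}^*$ therefore collapse to $(\tau^2 + 3\tau + 4)^{n/2}$, matching the desired bound.

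The only mildly delicate step is the odd-$n$ case, where I would need the near-perfect matching statement rather than the plain form of Sumner's theorem quoted just above; this is standard, and the single extra size-three star contributes only a constant multiplicative factor that the $\mathcal{O}^*$ notation absorbs. The rest of the argument is essentially a direct substitution into the Lemma.
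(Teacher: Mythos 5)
Your proposal is correct and follows essentially the same route as the paper: invoke Sumner's theorem to get a perfect (or, for odd $n$, near-perfect) matching, turn it into a star partition with all parts of size $2$ except possibly one of size $3$, and substitute $D=1$ into the star-partition lemma so that both terms collapse to $(\tau^2+3\tau+4)^{n/2}$. Your explicit handling of the odd-$n$ case is a welcome elaboration of the paper's parenthetical remark ``but at most one set with cardinality 3,'' but it is the same argument.
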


However, we may obtain a similar improvement for $K_{1,d}$-free graph for different $d$.

\begin{lemma}
Every connected $K_{1,d}$-free graph $G$ has a star partition  $\mathcal{S} = \{S_1,S_2,\ldots,S_r\}$ such that $s_i \leq d-1$ for any $i =1,2,\ldots,r-1$ and $s_r \leq d$.
\end{lemma}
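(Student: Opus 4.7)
The plan is to reduce the lemma to the following strengthening: every connected $K_{1,d}$-free graph $G$ admits a spanning tree $T$ with $\Delta(T) \leq d-1$. Once this is established, applying the construction from the preceding Remark to such a $T$ produces a star partition $\mathcal{S} = \{S_1, \ldots, S_r\}$ in which each $S_i$ (for $i<r$) consists of a tree-vertex $u$ together with its $T$-leaf-neighbors, so $|S_i| = \deg_T(u) \leq \Delta(T) \leq d-1$; meanwhile $S_r$ is the vertex set of the terminal star-subtree, and therefore $|S_r| \leq \Delta(T)+1 \leq d$, exactly matching the bounds in the statement.

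To construct a spanning tree with $\Delta(T) \leq d-1$, I would choose $T$ minimizing the potential
\[
\Phi(T) \;:=\; \sum_{v \in V(G)} \binom{\deg_T(v)}{2}
\]
among all spanning trees of $G$, and argue by contradiction. Suppose some vertex $u$ has $\deg_T(u) \geq d$, with $T$-neighbors $v_1, \ldots, v_k$ where $k \geq d$. All $v_i$ lie in $N_G(u)$, so the $K_{1,d}$-freeness of $G$ forces a $G$-edge between some pair $v_iv_j$; otherwise $\{u, v_1, \ldots, v_d\}$ would induce a $K_{1,d}$. Define $T' := T - uv_i + v_iv_j$; this is again a spanning tree, since after deleting $uv_i$ the vertex $v_j$ remains in the component of $u$. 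The degree sequence of $T'$ differs from that of $T$ only at $u$ (decreased by one) and $v_j$ (increased by one), giving
\[
\Phi(T') - \Phi(T) \;=\; \deg_T(v_j) - \deg_T(u) + 1.
\]
Selecting the $G$-edge $v_iv_j$ so that $\deg_T(v_j) \leq \deg_T(u) - 2$ then yields $\Phi(T') < \Phi(T)$, contradicting the minimality of $\Phi(T)$.

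The main obstacle is arranging the swap so that the strict inequality $\deg_T(v_j) \leq \deg_T(u) - 2$ can always be enforced: if the $G$-edges among $\{v_1,\ldots,v_k\}$ happened to lie only between vertices of large $T$-degree, a naive swap might merely shift high degree from $u$ to $v_j$ without decreasing $\Phi$. To handle this, I would stratify $\{v_1,\ldots,v_k\}$ by $T$-degree and invoke $K_{1,d}$-freeness on carefully chosen subsets of size $d$ drawn from the strata of lowest $T$-degree, extracting a $G$-edge with an endpoint of sufficiently small $T$-degree; a routine case analysis then supplies the required swap. With this refined selection the contradiction goes through, yielding a spanning tree with $\Delta(T) \leq d-1$ and, via the Remark, the desired star partition.
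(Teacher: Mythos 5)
Your reduction hinges on the claim that every connected $K_{1,d}$-free graph has a spanning tree $T$ with $\Delta(T)\leq d-1$, and that claim is false. Take $d=3$ and let $G$ be the net: a triangle $abc$ with a pendant vertex attached to each of $a$, $b$, $c$. This graph is connected and claw-free (each of $a,b,c$ has exactly three neighbours, two of which are adjacent), but a spanning tree with maximum degree $2$ would be a Hamiltonian path, which cannot exist because the net has three vertices of degree $1$ while a path has only two endpoints. Accordingly, your potential-function argument necessarily gets stuck on this example: in any spanning tree of the net the unique vertex $u$ of tree-degree $3$ has tree-neighbours of degrees $2$, $2$ and $1$, the only $G$-edge among them joins the two vertices of degree $2$, and the resulting swap gives $\Phi(T')-\Phi(T)=\deg_T(v_j)-\deg_T(u)+1=0$, not a strict decrease. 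The ``routine case analysis'' you defer to cannot be completed, because no admissible swap exists. (A weaker bound $\Delta(T)\leq d$ might be attainable, but via the Remark it would only yield $s_i\leq d$, not the $s_i\leq d-1$ that the lemma asserts.)

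The paper avoids this by never bounding the maximum degree of the whole tree. It peels stars off the end of a longest path of $T$: with $v_1,u,x$ the first three vertices of such a path, every neighbour of $u$ except possibly $x$ is a leaf; if $u$ has $k\geq d-1$ leaf-children $v_1,\ldots,v_k$, then $K_{1,d}$-freeness is applied to the set $\{u\}\cup\{x,v_1,\ldots,v_k\}$ --- crucially including the non-leaf neighbour $x$ --- which produces a $G$-edge $v_iv_j$ or $v_ix$ along which one leaf is re-hung. That swap is only required to decrease the number of leaf-children of the current vertex $u$, not any global potential, so it always makes progress; once $k\leq d-2$ the set $\{u,v_1,\ldots,v_k\}$ of size at most $d-1$ is removed and the procedure recurses on the remaining tree. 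On the net this succeeds immediately (one has $k=1$ at every step), precisely where your global strengthening fails. If you wish to keep a spanning-tree formulation, you must weaken the target to such a local condition on the vertex currently being peeled rather than a bound on $\Delta(T)$.
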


\begin{proof}
Again we shall construct $\mathcal{S}$ using a spanning tree. If $G$ has at most $d$ vertices, we set $S_r = V(G)$ and finish. Otherwise, let $T$ be a spanning tree of $G$ and let $v_1$, $u$ and $x$ be, respectively, the first, the second and the third vertex of the longest path in $T$. Notice that $v_1$ is a leaf in $T$ and every neighbor of $u$ in $T$ but at most one vertex (i.e. $x$) is a leaf in $T$ as well. Let $\{v_1, v_2, \ldots, v_k\}$ be the set of neighbors of $u$, which are leaves in $T$. 
We shall consider two cases.

\textbf{Case 1:} If $k \leq d-2$, then we include the set $\{u, v_1, v_2, \ldots, v_k\}$ to our partition and proceed with the tree $T \setminus \{u, v_1, v_2, \ldots, v_k\}$. This set has at most $d-1$ vertices.

\textbf{Case 2:} Suppose now that $k \geq d-1$. Observe that if $k = d-1$, then $x$ is not a leaf, since we assumed that $G$ has at least $d+1$ vertices. Therefore the set $\{u\} \cup \{x, v_1, v_2, \ldots, v_k\}$ does not induce a star in $G$ (as $G$ is $K_{1,d}$-free). Thus $v_iv_j \in E(G)$ for some $1 \leq i < j \leq k$ or $v_ix \in E(G)$ for some $1 \leq i \leq k$. We shall recursively transform our spanning tree $T$ using one of the following transformations.

\begin{itemize}
\item[(T1)] If $v_iv_j \in E(G)$ for some $1 \leq i < j \leq k$ (without loss of generality let $i=1$ and $j=2$), add it to $T$ and remove from $T$ the edge $v_1u$ ($H := (V(G), E(T) \cup \{v_1v_2\} \setminus \{uv_1\}$) (see Figure \ref{trans1}).
\end{itemize}

   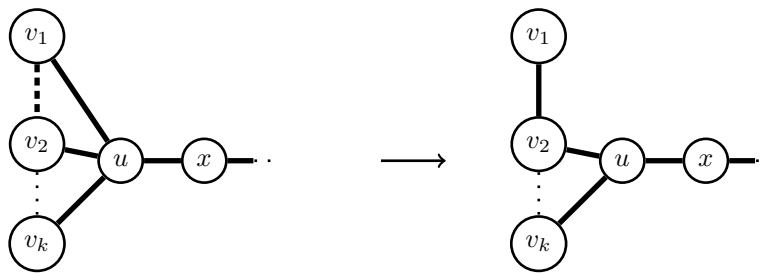
\begin{figure}[h]
  \centering
\begin{tikzpicture}[scale=1.1]
	\tikzstyle{nnode}=[draw, shape=circle, line width=1pt]
    \tikzstyle{edge}=[draw,line width=2pt,-]
    \tikzstyle{arrow}=[draw,line width=1pt,->]
    \tikzstyle{dotedge}=[draw,line width=2pt, densely dashed]
    \tikzstyle{ldotedge}=[draw,line width=1pt, loosely dotted]
   
		\node[nnode] (v1) at (0,1.5) {$v_1$};
		\node[nnode] (v2) at (0,0.2) {$v_2$};
		\node[nnode] (vk) at (0,-1) {$v_k$};
		\node[nnode] (u) at (1,0) {$u$};	
        \node[nnode] (x) at (2,0) {$x$};	
        \node (rest) at (3,0) { };	
        \node (rest1) at (2.7,0) { };	
        \node (from) at (4,0) { };	
        \draw [edge] (v1) -- (u);
        \draw [edge] (v2) -- (u);
        \draw [edge] (vk) -- (u);
        \draw [edge] (u) -- (x);
    	\draw [dotedge] (v1) to (v2);
        \draw [edge] (x) -- (rest1);
        \draw [ldotedge] (x) -- (rest);		
        \draw [ldotedge] (v2) -- (vk);		
        \node (to) at (5,0) { };	
		\node[nnode] (v1a) at (6,1.5) {$v_1$};
		\node[nnode] (v2a) at (6,0.2) {$v_2$};
		\node[nnode] (vka) at (6,-1) {$v_k$};
		\node[nnode] (ua) at (7,0) {$u$};	
        \node[nnode] (xa) at (8,0) {$x$};	
        \node (resta) at (9,0) { };	
        \node (rest1a) at (8.7,0) { };	
        \draw [edge] (v2a) -- (ua);
        \draw [edge] (vka) -- (ua);
        \draw [edge] (ua) -- (xa);
    	\draw [edge] (v1a) to (v2a);
        \draw [edge] (xa) -- (rest1a);
        \draw [ldotedge] (xa) -- (resta);		
        \draw [ldotedge] (v2a) -- (vka);
       \draw [arrow] (from) -- (to);		

\end{tikzpicture}
\caption{\label{trans1} Transformation (T1)}
\end{figure}

\begin{itemize}
\item[(T2)] If $v_ix \in E(G)$ for some $1 \leq k$ (without loss of generality let $i=1$), add it to $T$ and remove from $T$ the edge $v_1u$ ($H := (V(G), E(T) \cup \{v_1x\} \setminus \{uv_1\}$) (see Figure \ref{trans2}).
\end{itemize}

   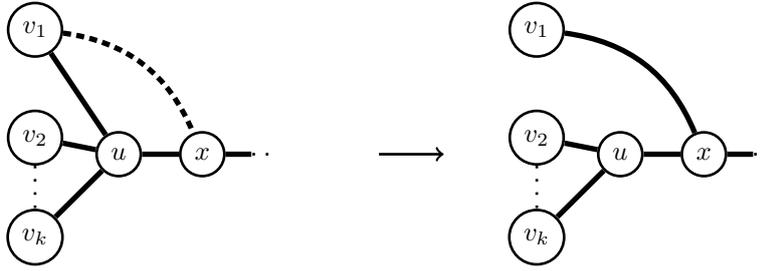
\begin{figure}[h]
  \centering
\begin{tikzpicture}[scale=1.1]
	\tikzstyle{nnode}=[draw, shape=circle, line width=1pt]
    \tikzstyle{edge}=[draw,line width=2pt,-]
    \tikzstyle{arrow}=[draw,line width=1pt,->]
    \tikzstyle{dotedge}=[draw,line width=2pt, densely dashed]
    \tikzstyle{ldotedge}=[draw,line width=1pt, loosely dotted]
   
		\node[nnode] (v1) at (0,1.5) {$v_1$};
		\node[nnode] (v2) at (0,0.2) {$v_2$};
		\node[nnode] (vk) at (0,-1) {$v_k$};
		\node[nnode] (u) at (1,0) {$u$};	
        \node[nnode] (x) at (2,0) {$x$};	
        \node (rest) at (3,0) { };	
        \node (rest1) at (2.7,0) { };	
        \node (from) at (4,0) { };	
        \draw [edge] (v1) -- (u);
        \draw [edge] (v2) -- (u);
        \draw [edge] (vk) -- (u);
        \draw [edge] (u) -- (x);
    	\draw [dotedge] (v1) to [bend left] (x);
        \draw [edge] (x) -- (rest1);
        \draw [ldotedge] (x) -- (rest);		
        \draw [ldotedge] (v2) -- (vk);		
        \node (to) at (5,0) { };	
		\node[nnode] (v1a) at (6,1.5) {$v_1$};
		\node[nnode] (v2a) at (6,0.2) {$v_2$};
		\node[nnode] (vka) at (6,-1) {$v_k$};
		\node[nnode] (ua) at (7,0) {$u$};	
        \node[nnode] (xa) at (8,0) {$x$};	
        \node (resta) at (9,0) { };	
        \node (rest1a) at (8.7,0) { };	
        \draw [edge] (v2a) -- (ua);
        \draw [edge] (vka) -- (ua);
        \draw [edge] (ua) -- (xa);
    	\draw [edge] (v1a) to [bend left] (xa);
        \draw [edge] (xa) -- (rest1a);
        \draw [ldotedge] (xa) -- (resta);		
        \draw [ldotedge] (v2a) -- (vka);
       \draw [arrow] (from) -- (to);		

\end{tikzpicture}
\caption{\label{trans2} Transformation (T2)}
\end{figure}

Notice that if none of the above transformations can be applied to $T$, then $k \leq d-2$ and this is exactly Case 1.
This finishes the proof.
\end{proof}

Having such a partition, we obtain the following complexity bound.

\begin{corollary}
The algorithm \algname solves the $\tau$-bounded generalized list $T$-coloring problem on a $K_{1,d}$-free graph with $n$ vertices in time\\ $\mathcal{O}^*\left ((\tau^2 + 3\tau +4)^{n/2} + (2(\tau+2)^{d-2} + \tau (\tau+1)^{d-2})^{n/(d-1)} \right )$.
\end{corollary}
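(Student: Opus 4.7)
My plan is to simply compose the two preceding results. First, I would apply the lemma just established, which produces, for every connected $K_{1,d}$-free graph $G$, a star partition $\mathcal{S}=\{S_1,\ldots,S_r\}$ with $s_i\leq d-1$ for every $i<r$ and $s_r\leq d$. Since a star partition by definition consists of parts of size at least two, every $s_i$ with $i<r$ in fact lies in the range $2\leq s_i\leq d-1$.

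Next, I would feed this partition directly into the earlier complexity lemma for star partitions, instantiated with the constants $D:=d-2$ (so that the uniform upper bound $D+1$ on the sizes of the first $r-1$ parts equals $d-1$) and $D':=d$. That lemma yields the running time
\begin{equation*}
\mathcal{O}^*\!\left((\tau^2+3\tau+4)^{n/2} + \bigl(2(\tau+2)^{D}+\tau(\tau+1)^{D}\bigr)^{n/(D+1)}\right),
\end{equation*}
and substituting $D=d-2$ reproduces verbatim the bound stated in the corollary. If the input graph is disconnected, one processes each connected component independently, as noted at the start of Section~\ref{sec:complexity}; since the property of being $K_{1,d}$-free is inherited by induced subgraphs, the star-partition lemma applies to each component, and the overall running time is dominated by the single-component bound.

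There is no genuine obstacle in this proof: the combinatorial content has been carried entirely by the two preceding lemmas, and what remains is a direct substitution. The only things worth checking are that the hypotheses of the complexity lemma are satisfied with the chosen constants (which is precisely what the star-partition lemma for $K_{1,d}$-free graphs guarantees) and that the single oversized last part $S_r$, of size at most $d$, contributes only an $\mathcal{O}^*(1)$ factor, which has already been absorbed into the $\mathcal{O}^*$ notation during the proof of the complexity lemma.
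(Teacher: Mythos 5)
Your proposal is correct and matches the paper's (implicit) derivation: the corollary is stated there without a separate proof, as an immediate consequence of instantiating the star-partition complexity lemma with $D=d-2$ and $D'=d$ on the partition produced by the $K_{1,d}$-free star-partition lemma, exactly as you do. The only point worth flagging is that you have (correctly) used the exponent $\frac{n}{D+1}$ that emerges at the end of that lemma's proof rather than the $\frac{n}{D-1}$ appearing in its statement --- the latter is a typo in the paper, as confirmed by the fact that only $\frac{n}{D+1}$ is consistent with this corollary and with the claw-free and unit disk special cases.
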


Another class of graphs we want to mention are unit disk graphs, i.e. intersection graphs of unit disks on a plane (see for example Clark {\em et al.} \cite{UDG}). They are particularly interesting due to their applications in modeling of ad-hoc networks. It is widely known that unit disk graphs are $K_{1,7}$-free. 

\begin{corollary}
The algorithm \algname solves the $\tau$-bounded generalized list $T$-coloring problem on a unit disk graph with $n$ vertices in time\\ $\mathcal{O}^* \left ((\tau^2 + 3\tau +4)^{n/2} + (2(\tau+2)^{5} + \tau (\tau+1)^{5})^{n/6} \right )$.
\end{corollary}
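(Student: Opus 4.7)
The plan is to derive this statement as an immediate specialization of the preceding corollary for $K_{1,d}$-free graphs, taking $d=7$. The only ingredient beyond that corollary is the standard geometric fact, already invoked in the paragraph preceding the statement, that every unit disk graph is $K_{1,7}$-free. Once this is in place the bound follows by pure substitution: with $d=7$ the expression $2(\tau+2)^{d-2}+\tau(\tau+1)^{d-2}$ becomes $2(\tau+2)^{5}+\tau(\tau+1)^{5}$ and the exponent $n/(d-1)$ becomes $n/6$, matching the claim exactly, while the $(\tau^2+3\tau+4)^{n/2}$ term is inherited unchanged.

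For completeness I would include a short proof of $K_{1,7}$-freeness, since it is the only nontrivial content. Suppose for contradiction that some unit disk with center $c_0$ has seven neighbors whose centers $c_1,\ldots,c_7$ form an independent set in the intersection graph, so that $\|c_0-c_i\|\le 2$ for every $i$ while $\|c_i-c_j\|>2$ for all $i\ne j$. The seven rays from $c_0$ through $c_1,\ldots,c_7$ partition the full angle $2\pi$ around $c_0$, so by the pigeonhole principle two of them, say those through $c_i$ and $c_j$, enclose an angle strictly less than $2\pi/7<\pi/3$. Applying the law of cosines in the triangle $c_0c_ic_j$, whose two sides at $c_0$ have length at most $2$ and make an angle smaller than $\pi/3$, one obtains $\|c_i-c_j\|<2$, contradicting the non-adjacency of the $i$-th and $j$-th disks.

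The argument is essentially routine and I do not anticipate a genuine obstacle: the geometric lemma is classical, and once $K_{1,7}$-freeness is recorded, the complexity bound is obtained by plugging $d=7$ into a formula that has already been proved in full generality.
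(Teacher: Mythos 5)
Your proposal is correct and matches the paper exactly: the paper obtains this corollary by instantiating the preceding $K_{1,d}$-free bound at $d=7$, justified only by the remark that unit disk graphs are widely known to be $K_{1,7}$-free. Your additional pigeonhole/law-of-cosines argument for $K_{1,7}$-freeness is sound (the minimum angle among seven rays is at most $2\pi/7<\pi/3$, forcing two neighboring centers within distance $2$) and merely supplies a proof of the classical fact the paper cites without proof.
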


Recall that star partitions consisting of small stars yield lower complexity bound of the algorithm. Therefore, if we want to construct them using a spanning tree, the maximum degree of such a tree should be lowest possible. However, deciding if the input graph has a spanning tree with maximum degree at most $k$ is NP-complete for every $k \geq2$ (see Garey, Johnson \cite{GJ}).
\subsubsection{Partitions into cliques}

Another subgraphs in the input graph may also prove useful in constructing the partition $\mathcal{S}$.
Hell and Kirkpatrick \cite{HK} studied the problem of partitioning the vertex set of a graph into cliques.
Let \textit{clique packing} be a subgraph whose every connected component is a clique with at least 2 vertices.
Let $\rho(G)$ denote the order of the largest (in terms of the number of vertices)  clique packing in $G$. Note that a matching is a special case of a clique packing of $G$. Therefore, if $m$ is the size of maximum matching in $G$, we have: $2m \leq \rho(G)$.

Let $G$ be a graph, such that $|V(G)| = \rho(G)$ and let $H$ be its largest clique packing. A {\em  clique partition} is a partition of $V(G)$ into vertex sets of connected components of $H$. In other words, every set of the clique partition induces a clique in $G$.
 
Hell and Kirkpatrick \cite{HK} presented an elegant structural theorem allowing to compute the value of $\rho(G)$.
Moreover, they described graphs $G$ having a clique partition.

\begin{theorem}
The algorithm \algname solves the $\tau$-bounded generalized list $T$-coloring problem on a graph $G$ with $n$ vertices in time $\mathcal{O}^* \left ((\tau^2 + 3\tau +4)^{\rho(G)/2} (\tau+2)^{n-\rho(G)}  \right)$.
\end{theorem}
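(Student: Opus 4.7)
The plan is to build the partition $\mathcal{S}$ directly from the largest clique packing $H$ of $G$: take each clique $C_i$ of $H$ (of size $k_i \ge 2$, with $\sum_i k_i = \rho(G)$) as a single block, and take each vertex outside $V(H)$ as its own singleton block. Since formula (\ref{bound}) depends only on the per-block prefix counts $f_j$, it will suffice to establish $f_j \le (\tau^2+3\tau+4)^{k_i/2}$ for each clique block and $f_j = \tau+2$ for each singleton. The resulting product then equals exactly $(\tau^2+3\tau+4)^{\rho(G)/2}(\tau+2)^{n-\rho(G)}$, and this dominates the $2^n$ term in (\ref{bound}) because $\sqrt{\tau^2+3\tau+4} \ge \sqrt{8} > 2$ and $\tau+2 \ge 3 > 2$ for $\tau \ge 1$.

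The singleton bound is immediate. For a clique $C_i$ of size $k$, the key observation is that values in $\{2,\ldots,\tau+1\}$ encode specific labels, so two adjacent vertices cannot carry the same such value (as $0 \in t(e)$ for every edge $e$). Hence in any feasible prefix restricted to $C_i$, the entries belonging to $\{2,\ldots,\tau+1\}$ must be pairwise distinct. Let $g(k)$ be the number of sequences in $\{0,1,\ldots,\tau+1\}^k$ with this pairwise-distinctness property; direct counting gives $g(2) = \tau^2+3\tau+4$ and $g(3) = \tau^3+3\tau^2+8\tau+8$, and $f_j \le g(k_i)$. Moreover, for any partition of the $k$ positions into sub-groups of sizes $t_1,\ldots,t_\ell$, we have $g(k) \le \prod_j g(t_j)$, because the constraint ``pairwise distinct across the whole clique'' is strictly stronger than ``pairwise distinct within each sub-group''.

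The core lemma $g(k) \le g(2)^{k/2}$ for $k \ge 2$ then follows by induction via the recursion $g(k) \le g(k-2)\cdot g(2)$ (obtained by splitting off any two vertices). The even base case $k=2$ is trivial; the odd base case reduces to proving $g(3) \le g(2)^{3/2}$, equivalently $g(2)^3 - g(3)^2 \ge 0$. Expanding both sides yields
\[
g(2)^3 - g(3)^2 \;=\; 3\tau^5 + 14\tau^4 + 35\tau^3 + 44\tau^2 + 16\tau,
\]
which is manifestly non-negative for $\tau \ge 0$.

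The main obstacle is precisely this odd-case inequality: the naive decomposition of a triangle into an edge plus a singleton would contribute a factor $\tau+2$, which is too large since $(\tau+2)^2 = \tau^2+4\tau+4 > \tau^2+3\tau+4$. So the triangle genuinely needs the polynomial verification above in order to absorb its ``extra'' vertex into the $(\tau^2+3\tau+4)^{1/2}$-per-vertex rate claimed by the theorem. Once the clique-block bound is in hand, substituting into (\ref{bound}) immediately gives the stated running time.
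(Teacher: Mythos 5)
Your counting and your key inequality are exactly what the theorem needs, and in one respect you go further than the paper: the paper's proof first refines the largest clique packing into a spanning subgraph whose components are all $K_2$ or $K_3$, computes the per-block prefix counts $\tau^2+3\tau+4$ and $\tau^3+3\tau^2+8\tau+8$ (your $g(2)$ and $g(3)$, with the same inclusion--exclusion over how many coordinates carry values from $\{2,\ldots,\tau+1\}$), and then simply asserts that the resulting product is ``maximized for $q=0$,'' where $q$ is the number of triangles. That assertion is precisely your inequality $g(3)^2\le g(2)^3$, which you verify by the expansion $g(2)^3-g(3)^2=3\tau^5+14\tau^4+35\tau^3+44\tau^2+16\tau\ge 0$; I checked the arithmetic and it is correct, so you have supplied a step the paper leaves unproved. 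Your submultiplicativity observation $g(k)\le\prod_j g(t_j)$ and the induction $g(k)\le g(k-2)\,g(2)$ are also sound.

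The one genuine problem is your choice of blocks. You take each clique of the packing as a single block $S_i$, but formula (\ref{bound}) is derived under the standing assumption that every block size $s_i$ is bounded by a constant $D$; the procedure \texttt{Compute} finds the feasible prefixes of a block by ranging over $[\tau+1]^{s_1}$, so a block of size $k$ carries an enumeration overhead of order $(\tau+2)^{k}$ rather than $g(k)$. Since $(\tau+2)^2=\tau^2+4\tau+4>\tau^2+3\tau+4$, for a graph such as $K_n$ (a single clique with $k=\rho(G)=n$) that overhead is $(\tau+2)^n$, which strictly exceeds the claimed $(\tau^2+3\tau+4)^{n/2}$; so ``substituting into (\ref{bound})'' is not licensed for blocks of unbounded size. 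Fortunately your own lemma is exactly the repair: split each clique into blocks of size $2$ plus, for odd cliques, one block of size $3$ (this is what the paper does). Then all blocks have constant size, the product of the per-block counts is $g(2)^{p}g(3)^{q}$ with $2p+3q=\rho(G)$, and your odd-case inequality gives $g(2)^{p}g(3)^{q}\le g(2)^{\rho(G)/2}$, after which the stated bound follows as you describe.
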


\begin{proof}
Let $H$ be the largest clique packing of $G$. Consider a spanning subgraph $H'$ of $H$ in which every connected component is isomorphic to $K_2$ or $K_3$. It can be clearly done -- the vertex set of connected component of $H$ with even number of vertices is partitioned into single edges (a perfect matching), while each component with odd number of vertices is partitioned into single edges and exactly one triangle.

Let $p$ be a number of components isomorphic to $K_2$ and $q$ be the number of components isomorphic to $K_3$. Clearly $2p+3q = \rho(G)$.
Let $\mathcal{S}=\{S_1, .., S_r\}$ be a partition of $V(G)$, such that the sets $S_i$ for $i \leq p$ correspond to $K_2$-components in $H'$ and the sets $S_i$ for $p \leq p+q$ correspond to $K_3$-components of $H'$. The remaining sets $S_i$ for $i > c$ contain the remaining vertices of $G$, one vertex per set.

Let us consider a subgraph $G[S_i]$ for some $i \leq p + q$.  Since it is a clique, each vertex has to be labeled with a different label. Therefore in any prefix of a vector from $T[k] \oplus P$, corresponding to the vertices from $S_i$, each element from $\{2,..,\tau+1\}$ may appear at most once. Recall that for $S_i$ inducing an edge (i.e. for $i \leq p$) we have at most $\tau^2 + 3\tau +4$ possible prefixes.

Now let us consider 3-element prefix $\a$ corresponding to a triangle. There are:
\begin{itemize}
\item $2^3=8$ prefixes with no element from  $\{2,..,\tau+1\}$,
\item $3\tau \cdot 2^2 = 12 \cdot \tau$ prefixes having exactly one element from $\{2,..,\tau+1\}$,
\item $3\tau (\tau-1) \cdot 2 = 6\cdot \tau(\tau-1)$ prefixes having exactly two elements from $\{2,..,\tau+1\}$,
\item $\tau (\tau-1) (\tau-2)$ prefixes having exactly three elements from $\{2,..,\tau+1\}$.
\end{itemize}
For the remaining sets $S_i$ (containing single vertices, i.e. for $i>p+q$), there are $\tau+2$ possible prefixes of length 1.
Hence the number of possible $s_i$-element prefixes of a vector $\a$ in $T[k] \oplus P$ is at most
$$
f_i \leq 
\begin{cases}
\tau^2 + 3\tau +4 & \text{ for } i \leq p\\
\tau^3 + 3\tau^2 + 8\tau +8 & \text{ for } p < i \leq p + q\\
\tau+2 & \text{ for } i  > p+q.\\
\end{cases}$$

Again, using formula (\ref{bound}), we obtain the following.

\begin{align*}
F(n)&=\mathcal{O}^*(2^n + \prod _{i=1}^{r} f_i) \\
&=\mathcal{O}^* \left (2^n + (\tau^2 + 3\tau +4)^p (\tau^3 + 9\tau^2 + 2\tau +8)^q (\tau+2)^{n-\rho(G)}  \right) \\
&=\mathcal{O}^* \left (2^n + (\tau^2 + 3\tau +4)^\frac{\rho(G)-3q}{2} (\tau^3 + 3\tau^2 + 8\tau +8)^q (\tau+2)^{n-\rho(G)}  \right) \\
\end{align*}

This expression is maximized for $q=0$. So finally we obtain the bound:

\begin{align*}
F(n) &=\mathcal{O}^* \left (2^n + (\tau^2 + 3\tau +4)^{\rho(G)/2} (\tau+2)^{n-\rho(G)}  \right)\\
	 &=\mathcal{O}^* \left ((\tau^2 + 3\tau +4)^{\rho(G)/2} (\tau+2)^{n-\rho(G)}  \right)\\
\end{align*}
\end{proof}

The Table \ref{comparison} compares the complexity bounds (more precisely, the bases of the exponential factor) of the algorithm \algname applied to various graph classes for some values of $\tau$.

\begin{table}[h]
\begin{center}
\begin{tabular}{|l||l|l|l|l|l|l|}
\hline
	 &	general	& subcubic	& claw-free		&	regular	&	graphs having &	unit disk \\
$\tau$		 &	graphs	& graphs			&	graphs		&	graphs	&	clique partition & graphs	 \\ \hline \hline
1		 &	3.0000	& 2.8021		& \multicolumn{3}{c|}{2.8285}							& 2.8340	 \\ \hline
2		 &	4.0000	& 3.6841		& \multicolumn{4}{c|}{3.7417}  \\ \hline
3		 &	5.0000	& 4.6105		& \multicolumn{4}{c|}{4.6905} \\ \hline
4		 &	6.0000	& 5.5613		& \multicolumn{4}{c|}{5.6569}		 \\ \hline
5		 &	7.0000	& 6.5266		& \multicolumn{4}{c|}{6.6333}		 \\ \hline
\end{tabular}
\end{center}
\caption{Comparison of bases of the exponential factor in the complexity bound.}
\label{comparison}
\end{table}

\noindent \textbf{Acknowledgement.} The authors are grateful to Professor Zbigniew Lonc for useful advice on graph factors.


\begin{thebibliography}{10}

\bibitem{AZ}
N.~Alon and A.~Zaks.
\newblock T-choosability in graphs.
\newblock {\em Discrete Applied Mathematics}, 82:1 -- 13, 1998.

\bibitem{AK}
A.~Amahashi and M.~Kano.
\newblock On factors with given components.
\newblock {\em Discrete Mathematics}, 42:1 -- 6, 1982.

\bibitem{BHK}
A.~Bj\"{o}rklund, T.~Husfeldt, and M.~Koivisto.
\newblock Set partitioning via inclusion-exclusion.
\newblock {\em SIAM Journal on Computing}, 39:546--563, 2009.

\bibitem{Br}
A.~Brandst\"{a}dt, V.~B. Le, and J.~P. Spinrad.
\newblock {\em Graph classes: a survey}.
\newblock Society for Industrial and Applied Mathematics, Philadelphia, PA,
  USA, 1999.

\bibitem{Byskov}
J.~M. Byskov.
\newblock Enumerating maximal independent sets with applications to graph
  colouring.
\newblock {\em Operations Research Letters}, 32:547–556, 2004.

\bibitem{Calamoneri}
T.~Calamoneri.
\newblock The {L(h,k)}-labelling problem: An updated survey and annotated
  bibliography.
\newblock {\em The Computer Journal}, 54:1344--1371, 2011.

\bibitem{UDG}
B.~N. Clark, C.~J. Colbourn, and D.~S. Johnson.
\newblock Unit disk graphs.
\newblock {\em Discrete Mathematics}, 86:165 -- 177, 1990.

\bibitem{CK}
M.~Cygan and L.~Kowalik.
\newblock Channel assignment via fast zeta transform.
\newblock {\em Information Processing Letters}, 111:727--730, 2011.

\bibitem{Eppstein}
D.~Eppstein.
\newblock Small maximal independent sets and faster exact graph coloring.
\newblock {\em Journal on Graph Algorithms and Applications}, 7:131--140, 2003.

\bibitem{ERT}
P.~Erd\"os, A.~Rubin, and H.~Taylor.
\newblock Choosability in graphs.
\newblock {\em Proc. West Coast Conference on Combinatorics, Graph Theory and
  Computing, Arcata, Congressus Numerantium}, 26:125 -- 157, 1979.

\bibitem{FKS}
J.~Fiala, D.~Kr\'{a}l', and R.~Skrekovski.
\newblock A brooks-type theorem for the generalized list t-coloring.
\newblock {\em SIAM Journal on Discrete Mathematics}, 19:588--609, 2005.

\bibitem{FS}
J.~Fiala and R.~\v{S}krekovski.
\newblock List distance labelings of graphs.
\newblock {\em KAM Series}, 530, 2001.

\bibitem{GJ}
M.~R. Garey and D.~S. Johnson.
\newblock {\em Computers and Intractability; A Guide to the Theory of
  NP-Completeness}.
\newblock W. H. Freeman \& Co., New York, NY, USA, 1990.

\bibitem{GY}
J.~Griggs and R.~K. Yeh.
\newblock Labeling graphs with a condition at distance two.
\newblock {\em SIAM Journal on Discrete Mathematics}, 5:586 -- 595, 1992.

\bibitem{Hale}
W.~Hale.
\newblock Frequency assignment: theory and applications.
\newblock {\em Proc. 1EEE}, 68:1497--1514, 1980.

\bibitem{HKKKL}
F.~Havet, M.~Klazar, J.~Kratochv{\'{i}}l, D.~Kratsch, and M.~Liedloff.
\newblock Exact algorithms for l(2,1)-labelling.
\newblock {\em Algorithmica}, 59:169--194, 2011.

\bibitem{HK}
P.~Hell and D.~G. Kirkpatrick.
\newblock Packings by cliques and by finite families of graphs.
\newblock {\em Discrete Mathematics}, 49:45--59, 1984.

\bibitem{JT}
T.~R. Jensen and B.~Toft.
\newblock {\em Graph Coloring Problems}.
\newblock John Wiley and Sons, 1995.

\bibitem{TAMC}
K.~Junosza-Szaniawski, J.~Kratochvíl, M.~Liedloff, P.~Rossmanith, and
  P.~Rz\k{a}\.{z}ewski.
\newblock Fast exact algorithm for {L(2,1)}-labeling of graphs.
\newblock {\em Theoretical Computer Science}, 505:42 -- 54, 2013.

\bibitem{IWOCA}
K.~"Junosza-Szaniawski and P.~Rz\k{a}\.{z}ewski.
\newblock On improved exact algorithms for {L(2,1)}-labeling of graphs.
\newblock {\em Proc. of IWOCA 2010, Lecture Notes in Computer Science},
  "6460":"34--37", "2011".

\bibitem{IPL}
K.~Junosza-Szaniawski and P.~Rz\k{a}\.{z}ewski.
\newblock On the complexity of exact algorithm for {L(2, 1)}-labeling of
  graphs.
\newblock {\em Information Processing Letters}, 111:697 -- 701, 2011.

\bibitem{Kral-alg}
D.~Kr\'{a}l'.
\newblock An exact algorithm for the channel assignment problem.
\newblock {\em Discrete Applied Mathematics}, 145:326--331, 2005.

\bibitem{Kral-surv}
D.~Kr\'{a}l'.
\newblock The channel assignment problem with variable weights.
\newblock {\em SIAM Journal on Discrete Mathematics}, 20:690--704, 2006.

\bibitem{Lawler}
E.~Lawler.
\newblock A note on the complexity of the chromatic number problem.
\newblock {\em Information Processing Letters}, 5:66--67, 1976.

\bibitem{McDiarmid}
C.~McDiarmid.
\newblock Discrete mathematics and radio channel assignment.
\newblock {\em CMS Books Math. Ouvrages Math. SMC}, 11:27–63, 2003.

\bibitem{McD}
C.~McDiarmid.
\newblock On the span in channel assignment problems: bounds, computing and
  counting.
\newblock {\em Discrete Mathematics}, 266:387 -- 397, 2003.

\bibitem{Sumner}
D.~Sumner.
\newblock Graphs with 1-factors.
\newblock {\em Proceedings of the AMS}, 42:8 -- 12, 1974.

\bibitem{Vizing}
V.~Vizing.
\newblock Vertex colorings with given colors.
\newblock {\em Metody Diskret. Analiz. (in Russian)}, 29:3 -- 10, 1976.

\bibitem{Yeh}
R.~K. Yeh.
\newblock A survey on labeling graphs with a condition at distance two.
\newblock {\em Discrete Mathematics}, 306:1217 -- 1231, 2006.

\end{thebibliography}
\end{document}